\documentclass[11pt,a4paper]{article}
\usepackage{authblk} 
\usepackage{fancyhdr} 
\setlength{\headheight}{13.59999pt}
\addtolength{\topmargin}{-1.59999pt}
\usepackage[a4paper, total={6in, 8in}]{geometry}

\usepackage[T1]{fontenc}

\usepackage{amsmath,amssymb,amsthm}
\usepackage{mathtools}
\usepackage[capitalise]{cleveref}
\usepackage{comment}
\usepackage{proof}
\usepackage{graphicx}
\usepackage{tikz}
\usetikzlibrary{shapes,shapes.geometric,arrows,fit,calc,positioning,automata,chains,matrix.skeleton, arrows.meta}
\usepackage[caption=false]{subfig}

\newtheorem{theorem}{Theorem}
\newtheorem{lemma}{Lemma}
\newtheorem{example}{Example}
\newtheorem{remark}{Remark}

\usepackage[noend,linesnumbered]{algorithm2e}
\RestyleAlgo{ruled}
\SetKwComment{Comment}{/* }{ */}
\SetKwInOut{Input}{Inputs}
\SetKwInOut{Output}{Output}
\Crefname{algocf}{Algorithm}{Algorithms} 
\SetKw{Continue}{continue}


\newcommand{\Set}[1]{\{#1\}}

\newcommand{\sem}[1]{[\![#1]\!]}
\renewcommand{\vec}[1]{\boldsymbol{#1}}
\newcommand{\supp}[1]{\mathop{\mathrm{supp}(#1)}}
\renewcommand{\deg}[1]{\mathop{\mathrm{deg}(#1)}}
\newcommand{\coeff}[1]{\mathop{\mathrm{coeff}(#1)}}
\newcommand{\depth}[1]{\mathop{\mathrm{depth}(#1)}}
\newcommand{\reps}[1]{\left\lVert #1 \right\rVert}
\newcommand{\moninc}[3]{#1\mathrm{\uparrow}^{#2}_{#3}}

\tikzstyle{io} = [trapezium, 
trapezium stretches=true,
trapezium left angle=70, 
trapezium right angle=110, 
minimum width=3cm, 
minimum height=1cm, text centered, 
draw=black, fill=blue!30]
\tikzstyle{process} = [rectangle, 
minimum width=3cm, 
minimum height=1cm, 
text centered, 
text width=3cm, 
draw=black, 
fill=orange!30]
\tikzstyle{arrow} = [thick,->,>=stealth]

\DeclareRobustCommand{\good}{\Simley{0.5}{0.2}\xspace}
\DeclareRobustCommand{\bad}{\Simley{-0.5}{0.2}\xspace}
\newcommand{\Simley}[2]{%
	\begin{tikzpicture}[scale=#2]
	\newcommand*{\SmileyRadius}{1.0}%
	;
	
	\pgfmathsetmacro{\eyeX}{0.5*\SmileyRadius*cos(30)}
	\pgfmathsetmacro{\eyeY}{0.5*\SmileyRadius*sin(30)}
	\draw [line width=0.25mm] (\eyeX-0.25,\eyeY) -- (\eyeX-0.25,\eyeY+0.375);
	\draw [line width=0.25mm] (-\eyeX+0.25,\eyeY) -- (-\eyeX+0.25,\eyeY+0.375);
	
	\pgfmathsetmacro{\xScale}{2*\eyeX/180}
	\pgfmathsetmacro{\yScale}{1.0*\eyeY}
	\draw[line width=0.25mm, domain=-\eyeX:\eyeX]
	plot ({\x},{
		-0.1+#1*0.15 
		-#1*1.75*\yScale*(sin((\x+\eyeX)/\xScale))-\eyeY});
	\end{tikzpicture}%
}%

\providecommand{\keywords}[1]
{
  \small	
  \textbf{\textit{Keywords---}} #1
}

\title{Analyzing Value Functions of States in Parametric Markov Chains\thanks{Work supported by the Flemish (Belgium) inter-university (iBOF) ``DESCARTES'' and the Belgian F.R.S.-FNRS/FWO SynthEx (G0AH524N) projects.}}

\author{Kasper Engelen}
\author{Guillermo A. P\'erez}
\author{Shrisha Rao}

\affil{University of Antwerp -- Flanders Make, Antwerpen, Belgium\\
\texttt{\{kasper.engelen, guillermo.perez, shrisha.rao\}@uantwerpen.be}}
\date{}

\pagestyle{fancy}
\lhead{K. Engelen et al.}
\rhead{Value Functions of Parametric Markov Chains}

\begin{document}

\maketitle

\begin{abstract}

Parametric Markov chains (pMC) are used to model probabilistic systems with unknown or partially known probabilities. Although (universal) pMC verification for reachability properties is known to be coETR-complete, there have been efforts to approach it using potentially easier-to-check properties such as asking whether the pMC is monotonic in certain parameters. In this paper, we first reduce monotonicity to asking whether the reachability probability from a given state is never less than that of another given state. Recent results for the latter property imply an efficient algorithm to collapse same-value equivalence classes, which in turn preserves verification results and monotonicity. We implement our algorithm to collapse ``trivial'' equivalence classes in the pMC and show empirical evidence for the following: First, the collapse gives reductions in size for some existing benchmarks and significant reductions on some custom benchmarks; Second, the collapse speeds up existing algorithms to check monotonicity and parameter lifting, and hence can be used as a fast pre-processing step in practice. 

\vspace{10pt}
\keywords{Markov chains, State-space reduction, Parameters.}
\end{abstract}

\section{Introduction}

Finite-state Markovian models are widely used as an operational model for the quantitative analysis of systems with probabilistic behavior. For state reachability in Markov chains, there are well known probabilistic model checkers such as Storm \cite{DBLP:conf/cav/DehnertJK017} and PRISM \cite{DBLP:conf/cav/KwiatkowskaNP11}. However, in practice, it is not guaranteed that the exact probabilities in the model are known. One way to circumvent this is by modeling unknown probabilities as parameters and known relations between probabilities as functions over shared parameters.

Parametric Markov chains (pMCs, for short) \cite{DBLP:conf/ictac/Daws04} extend classical Markov chains with parameters, i.e. a finite set of real-valued variables, so that transition probabilities are now polynomials over these parameters. Applications of pMCs include model repair \cite{DBLP:conf/tacas/BartocciGKRS11,DBLP:journals/iandc/Chatzieleftheriou18,DBLP:conf/tase/ChenHHKQ013,DBLP:journals/pe/GoubermanST19,DBLP:conf/nfm/PathakAJTK15}, planning in POMDPs~\cite{DBLP:conf/vmcai/HeckSJMK22}, and optimizing randomized distributed algorithms \cite{DBLP:conf/srds/AflakiVBKS17}. Analysis of pMCs is supported by established probabilistic model checkers Storm, PRISM, as well as dedicated tools such as PARAM~\cite{DBLP:conf/cav/HahnHWZ10} and PROPhESY~\cite{DBLP:conf/cav/DehnertJJCVBKA15}.

The pMC verification problem asks: \textit{Given $\lambda\in[0,1]$, is the probability of reaching the target state from the start state, for all valid parameter valuations, at least $\lambda$?} This problem is known to be coETR-complete \cite{DBLP:conf/gd/Schaefer09,DBLP:conf/atva/EngelenPR23}.
Related to the verification problem, the concept of the never-worse relation (NWR) for pMCs has been introduced in \cite{DBLP:conf/fossacs/0001P18,DBLP:conf/atva/EngelenPR23}. Intuitively, a state is never worse than another if, for all valuations of the parameters, the probability of reaching the target state from the former state is at least as much as that from the latter. Another related notion is that of monotonicity: The (value from a) state is said to be monotonically increasing in a parameter if, for all valuations of the parameters, the probability of reaching the target state from it increases when the value of the parameter increases. Both the NWR and monotonicity turn out to be coETR-complete:
Spel et al. \cite{DBLP:conf/tacas/SpelJK21} give a simple polynomial-time reduction from pMC verification to the monotonicity problem by first reducing it to the NWR problem and then reducing the NWR problem to the monotonicity problem. Since the NWR problem is coNP-hard even when parameterization is trivial (i.e., every transition is labeled with a unique variable) \cite{DBLP:conf/fossacs/0001P18}, this implies that the monotonicity problem is also coNP-hard when the parameters are trivial.

Despite the discouraging complexity lower bounds recalled above, efficiently-checkable sufficient conditions to establish monotonicity were given in~\cite{DBLP:conf/tacas/SpelJK21}. Similarly, efficiently-checkable sufficient conditions for establishing never-worse relationships were studied in~\cite{DBLP:conf/atva/EngelenPR23}. In particular, an efficient algorithm to compute equivalence classes of trivially parameterized MCs also follows from \cite{DBLP:conf/atva/EngelenPR23}.  

\paragraph{Contributions.}
In this work, we first give a quasi-polynomial-time construction of a pMC whose solution function is the derivative of the solution function of a given pMC. Interestingly, our construction relies on classical results from circuit complexity theory that also allow us to give a polynomial-sized representation of solution functions. Having access to this derivative pMC, we establish a reduction from monotonicity to the NWR problem. Then, we give an algorithm for collapsing some (not necessarily all) equivalence classes of the pMC using the techniques from \cite{DBLP:conf/atva/EngelenPR23}. We present an implementation in Storm and perform experiments. We evaluate our implementation on several benchmarks from the existing literature and some custom benchmarks, showing that the equivalence-class collapse runs very fast, even for larger benchmarks, and speeds up the monotonicity check in many cases. Finally, we also evaluate our tool as a pre-processing step before parameter lifting (a technique used for near-optimal parameter synthesis)~\cite{DBLP:conf/tacas/SpelJK21} and observe that there too the equivalence-class collapse is sometimes useful.

\section{Preliminaries}
In this section, we introduce the notation we will be using for polynomials,
circuits, and Markov chains with parameters. For the latter, we follow as much as possible the notation used in~\cite{DBLP:journals/iandc/BaierHHJKK20}.
Let $x_1,\dots,x_m$ be \emph{parameters} that can take real values and write
$\vec{x} = (x_1, \dots, x_m)$.

\subsection{Polynomials and rational functions}
By $\mathbb{Q}[\vec{x}]$, we denote the
\emph{polynomial ring} in $\vec{x}$ over the rationals; by
$\mathbb{Q}(\vec{x})$, the analogue \emph{rational field}. For an (exponent)
vector $\vec{\alpha} \in \mathbb{N}^m$, we write $\vec{x}^{\vec{\alpha}}$ to
denote the monomial $\prod_{i=1}^m x_i^{\alpha_i}$. Now, every
\emph{polynomial} $f \in \mathbb{Q}[\vec{x}]$ can be written as a sum of
products
\(
  \sum_{\vec{\alpha} \in I} c_{\vec{\alpha}} \vec{x}^{\vec{\alpha}}
\)
for $I$ a finite subset of $\mathbb{N}^m$ and where $c_{\vec{\alpha}} \in
\mathbb{Q}$ for all $\vec{\alpha} \in I$. The \emph{support} of $f$ is the
subset of $I$ with a nonzero coefficient, $\supp{f} = \Set{\vec{\alpha}
\in \mathbb{N}^m \mid c_{\vec{\alpha}} \neq 0}$. When $\supp{f}$ is empty, we say $f$
is the zero function and write $f \equiv 0$. The \emph{degree} and
\emph{maximal coefficient component} of $f$ are
$\deg{f} = \max \Set{\sum_{i=1}^m \alpha_i \mid \vec{\alpha} \in \supp{f}}$
and $\coeff{f} = \max\Set{ |a|,|b| \in \mathbb{N} : \vec{\alpha} \in \supp{f}, \frac{a}{b} = c_{\vec{\alpha}} \text{ and is irreducible}}$,
with the convention that $\max \emptyset = 0$.

\subsection{Circuits for polynomials}
An \emph{(arithmetic) circuit} over $\mathbb{Q}[\vec{x}]$ is a labeled
directed acyclic graph (DAG) $\mathcal{C} = (V, E, \ell, o)$. Vertices with
indegree (a.k.a. \emph{fanin}) $0$ are called \emph{input gates} and they are
labeled by $\ell$ with some $x_i$ or a constant from $\Set{-1,0,1}$. The other
vertices all have fanin $2$, are called \emph{computation gates}, and are
labeled by one of the operations from the ring, i.e. $+$ and $\times$. The \emph{output gate} is $o \in V$. The
\emph{size} $|\mathcal{C}|$ of the circuit is $|V|$; its \emph{depth}, noted
$\depth{\mathcal{C}}$, is the length of a maximal path in $\mathcal{C}$. Every
vertex $v$ of the circuit \emph{computes} a polynomial function defined
inductively: If $v$ is an input gate, its function is the expression with
which it is labeled, i.e. $\sem{v} = \ell(v)$; otherwise, $\sem{v}$ is the
application of the operation $\ell(v)$ to all $\sem{u_i}$
such that $(u_i, v) \in E$. Finally, we write $\sem{\mathcal{C}} = \sem{o}$
for the function computed by the output gate.

Let $f \in \mathbb{Q}[\vec{x}]$ be a polynomial. If $f$ is given in its
sum-of-products representation, one can construct a circuit $\mathcal{C}$ of
size $O(|\supp{f}|\deg{f}\log(\coeff{f}))$ such that $\sem{\mathcal{C}}$ computes $f$.
Further, if the degree of $f$ is small (e.g. the exponent vectors are given in
unary), we can ensure $\mathcal{C}$ has poly-logarithmic depth.
\begin{theorem}[From~{\cite[Theorem
  2]{DBLP:journals/siamcomp/ValiantSBR83}}]\label{thm:depth-redux}
  Let $\mathcal{C}$ be a circuit over $\mathbb{Q}[\vec{x}]$ such that $d = \deg{\sem{\mathcal{C}}}$. Then, 
  we can construct a circuit $\mathcal{C}'$ of size
  $O((|\mathcal{C}|d^2)^3)$ and depth $O(\log(|\mathcal{C}|d) \log(d))$ that computes
  the same polynomial.
\end{theorem}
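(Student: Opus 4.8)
The plan is to follow the classical depth-reduction of Valiant, Skyum, Berkowitz, and Rabin; since the statement is quoted verbatim from their Theorem~2, one option is simply to invoke it, but let me sketch the construction so that the size and depth bookkeeping is transparent. Write $s = |\mathcal{C}|$ and $d = \deg{\sem{\mathcal{C}}}$. First I would \emph{homogenize} the circuit, replacing every gate $v$ by gates $v^{(0)}, \dots, v^{(d)}$, where $v^{(i)}$ is meant to compute the degree-$i$ homogeneous part of $\sem{v}$. Addition gates split componentwise, $(u+w)^{(i)} = u^{(i)} + w^{(i)}$, and a product gate uses the convolution $(u \times w)^{(i)} = \sum_{j+k=i} u^{(j)} \times w^{(k)}$. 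This yields an equivalent circuit whose output is $\sum_{i=0}^{d} o^{(i)}$ and whose size is $O(s d^2)$; crucially, every gate now has a well-defined degree, and degree is non-decreasing along edges.

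Second, for the homogenized circuit I introduce the \emph{quotient polynomials}. For a gate $u$ and a product gate $w$ with $\deg{w} > \deg{u}/2$, let $[u:w]$ denote the polynomial obtained by treating the output of $w$ as a fresh variable $Y$ and extracting the part of $\sem{u}$ that is linear in $Y$, where the substitution is propagated only through gates of degree strictly greater than $\deg{u}/2$. The point of the degree threshold is that along any such path the contribution is linear in $Y$, so $[u:w]$ is well-defined, and moreover $\deg{[u:w]} = \deg{u} - \deg{w} < \deg{u}/2$.

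The heart of the argument, and the main obstacle, is the recurrence. I would prove that for every gate $u$ of degree at least two,
\[
  \sem{u} \;=\; \sum_{w = w_1 \times w_2} [u:w] \cdot \sem{w_1} \cdot \sem{w_2},
\]
where the sum ranges over the \emph{frontier} product gates $w$, i.e. those with $\deg{w} > \deg{u}/2$ but $\deg{w_1}, \deg{w_2} \le \deg{u}/2$, together with an analogous recurrence expressing each $[u:w]$ as a sum of products of quotient polynomials and gate polynomials. The structural fact making this work is that every factor on the right-hand side has degree at most $\deg{u}/2$: the quotient factor by the threshold computation above, and the $\sem{w_i}$ because $w$ is a frontier gate. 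The delicate part is exactly verifying that restricting propagation to gates of degree $> \deg{u}/2$ keeps everything linear in $Y$ (so $[u:w]$ is well-defined) and that the two recurrences are correct; once this degree-halving invariant is in place, the rest is accounting.

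Finally I would assemble the circuit bottom-up, computing each quantity $\sem{u}$ and $[u:w]$ exactly once, in increasing order of the relevant degree. Each such quantity is a sum of at most $O(sd^2)$ products of three previously-computed values, realized by a balanced tree of additions and multiplications of depth $O(\log(sd))$; since there are $O((sd^2)^2)$ quantities in total, the overall size is $O((sd^2)^3)$, matching the claim with $|\mathcal{C}| = s$. For the depth, observe that computing $\sem{u}$ refers only to quantities of degree at most $\deg{u}/2$, so the dependency chain crosses $O(\log d)$ degree scales, each contributing depth $O(\log(sd))$; hence the total depth is $O(\log(sd)\log d)$, as required.
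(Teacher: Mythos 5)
Your sketch is correct and follows exactly the classical depth-reduction argument of Valiant, Skyum, Berkowitz, and Rabin (homogenization, the quotient polynomials $[u:w]$ with the degree-$>\deg{u}/2$ propagation threshold, the frontier recurrence, and the $O(\log(sd)\log d)$ depth accounting), which is precisely the source the paper cites; the paper itself offers no proof of this theorem, importing it verbatim as \cite[Theorem 2]{DBLP:journals/siamcomp/ValiantSBR83}. The size and depth bookkeeping in your final paragraph matches the stated bounds $O((|\mathcal{C}|d^2)^3)$ and $O(\log(|\mathcal{C}|d)\log(d))$, so nothing further is needed beyond spelling out the second recurrence for $[u:w]$, which you correctly flag as the delicate step.
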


\subsection{Circuits for rational functions}
Arithmetic circuits over $\mathbb{Q}(\vec{x})$ have also been studied
in the literature. Compared to the ones defined above, we would need
to further allow for \emph{division gates}. In general, such circuits
compute rational functions instead of polynomials. A well-known fact
is that one can remove all division gates and replace them by a single
division at the output of the circuit: We carry through the gates both
the numerator and the denominator of the current function.
Multiplication affects both of them the same, but addition results
in three extra multiplication gates since we need a common denominator.
This pushing of divisions to the top of the circuit is attributed to
Strassen~\cite{strassen73}. The main result of that same paper is to
argue that if a circuit over $\mathbb{Q}(\vec{x})$ computes a polynomial, then
a polynomial-sized circuit over $\mathbb{Q}[\vec{x}]$ can compute the same function.

\begin{theorem}[From~\cite{strassen73}]\label{thm:no-div}
    Let $\mathcal{C}$ be a circuit over $\mathbb{Q}(\vec{x})$ computing a
    polynomial such that $d = \deg{\sem{\mathcal{C}}}$. Then, we can construct a circuit $\mathcal{C}'$ over $\mathbb{Q}[\vec{x}]$ of
    size $(|\mathcal{C}|d)^{O(1)}$ that computes the same polynomial.
\end{theorem}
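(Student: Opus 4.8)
The plan is to follow the division-elimination strategy sketched in the paragraph preceding the statement, making precise the claim that divisions can be pushed to the output and that the resulting numerator/denominator circuit over $\mathbb{Q}[\vec{x}]$ stays polynomially sized. First I would take the circuit $\mathcal{C}$ over $\mathbb{Q}(\vec{x})$ and transform it into a pair of polynomials computed at each gate: for every gate $v$ I maintain a numerator gate $p_v$ and a denominator gate $q_v$ so that $\sem{v} = p_v / q_v$. The transformation is by induction on the DAG structure. For an input gate, set $p_v = \ell(v)$ and $q_v = 1$. For a multiplication gate with inputs $u,w$, set $p_v = p_u p_w$ and $q_v = q_u q_w$. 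For an addition gate, set $p_v = p_u q_w + p_w q_u$ and $q_v = q_u q_w$. A division gate $u / w$ is handled like multiplication by the reciprocal, swapping the role of numerator and denominator of $w$. This introduces only a constant number of new multiplication/addition gates per original gate, so the numerator circuit $\mathcal{C}_p$ and denominator circuit $\mathcal{C}_q$ each have size $O(|\mathcal{C}|)$.

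The subtlety is that $\mathcal{C}_p$ and $\mathcal{C}_q$ compute the numerator and denominator of $\sem{\mathcal{C}}$ only as a \emph{formal} rational expression; since we are told $\sem{\mathcal{C}}$ is in fact a polynomial $f$, we have $p_o = f \cdot q_o$ as elements of $\mathbb{Q}[\vec{x}]$, where $o$ is the output gate. The goal circuit $\mathcal{C}'$ must compute $f$ itself without any division, so the real content is the division step: recovering $f = p_o / q_o$ by an exact polynomial division performed entirely inside a division-free circuit. Here I would invoke the key idea attributed to Strassen, namely that exact division of one polynomial by another (when the quotient is known to be a polynomial) can be simulated by a truncated power-series expansion. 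Concretely, after a suitable change of variables ensuring $q_o$ has a nonzero constant term (a generic linear shift $x_i \mapsto x_i + a_i$ makes $q_o(\vec{a}) \neq 0$, and we may rescale so the constant term is $1$), one writes $1/q_o = \sum_{k \ge 0} (1 - q_o)^k$ and truncates the geometric series at degree $d = \deg{f}$. Multiplying the truncated series by $p_o$ and extracting the homogeneous parts up to degree $d$ yields exactly $f$, because all higher-degree contributions must cancel once $f$ is genuinely a polynomial of degree $d$.

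The step I expect to be the main obstacle is controlling the size of this truncated-power-series computation and the homogeneous-component extraction. The series $\sum_{k} (1 - q_o)^k$ has $\Theta(d)$ terms, and naively each term is a product of up to $d$ copies of a size-$O(|\mathcal{C}|)$ subcircuit, which threatens an exponential blowup; the resolution is to compute the series by repeated squaring / Newton-iteration-style doubling so that the power series to precision $d$ is obtained in $O(\log d)$ stages, and to interleave this with a \emph{homogeneous-degree decomposition} that keeps only the graded pieces up to degree $d$ at every gate. Extracting and maintaining homogeneous components of degree at most $d$ multiplies the gate count by a factor of $O(d)$ (one copy per degree), and composing these bounds yields a circuit of size $(|\mathcal{C}| d)^{O(1)}$, as claimed. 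Finally I would undo the change of variables by substituting $x_i \mapsto x_i - a_i$, which only prepends a linear-size gadget at the inputs and does not affect the polynomial computed, giving the desired division-free circuit $\mathcal{C}'$ over $\mathbb{Q}[\vec{x}]$ of polynomial size.
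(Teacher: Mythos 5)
This theorem is imported verbatim from Strassen's 1973 paper and the present paper offers no proof of its own beyond the informal sketch of the division-pushing step in the preceding paragraph, so there is nothing to diverge from: your reconstruction (numerator/denominator pairs at every gate, a linear shift making the denominator invertible at the origin, truncated geometric-series inversion with homogeneous-degree bookkeeping up to degree $d$) is precisely the classical argument and is correct. One small quibble: the ``naive'' computation of $\sum_{k\le d}(1-q_o)^k$ does not actually threaten an \emph{exponential} blowup, since a circuit reuses the subcircuit for $1-q_o$ and the successive powers cost only $O(d)$ extra multiplication gates --- the real reason the graded decomposition is needed is to truncate the intermediate degrees, which your argument does handle correctly.
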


\subsection{Parametric Markov chains}
A \emph{parametric Markov chain} (pMC, for short) is a tuple $\mathcal{M} =
(S,P)$ where $S$ is a finite set of states and $P \colon S \times S \to
\mathbb{Q}[\vec{x}]$ is a (parameterized) probabilistic transition function.
For convenience, we write $p_{ij}$ for $P(i,j)$, with the convention $S =
\Set{1, 2, \dots, n}$. A valuation $\vec{v} \in \mathbb{R}^m$ is \emph{graph preserving}
if for all $0 \leq i,j \leq n$ we have
\(
    p_{ij} \not\equiv 0 \implies 0 < p_{ij}(\vec{v}) \leq 1
\) and \(
    \sum_{k=1}^n p_{ik}(\vec{v}) = 1.
\) Intuitively, a graph-preserving valuation
allows us to instantiate a classical Markov chain (i.e. a nonparametric one) from $\mathcal{M}$. If the set of all graph-preserving valuations is  exactly $(0,1)^m$, we say it is a \emph{simple pMC}\footnote{Our definition of a simple pMC is more general than the one used in the literature (e.g., \cite{monotonicity,DBLP:journals/jcss/JungesK0W21})}.

For classical Markov chains, several interesting quantities like the
probability of eventually reaching a given state, are computable, e.g. via
linear programming~\cite{DBLP:books/wi/Puterman94}. When the Markov chain has
parameters, instead of a concrete probability, we can compute a rational
function $f$ in $\vec{x}$ which yields the corresponding value $f(\vec{v})$
given a graph-preserving valuation $\vec{v}$. In this work, we focus on
computing such functions for reachability.

\paragraph{Encoding and size of a pMC.} Let $f \in \mathbb{Q}[\vec{x}]$ be a polynomial. Henceforth, we write $\reps{f}$ for its representation size: $|\supp{f}|\deg{f}\log(\coeff{f})$. Intuitively, we
suppose the polynomial is represented as a sum of products with constants and
coefficients encoded as numerator-denominator pairs in binary while the exponent vectors are
encoded in unary. 
Then, by $|\mathcal{M}|$, we denote the \emph{size of the pMC},
that is $\sum_{i,j} \reps{p_{ij}}$. 

\section{Reachability value functions}
For this section, we fix a pMC $\mathcal{M} = (S,P)$ with $S =
\Set{1,\dots,n}$ and $p_{ij} \in \mathbb{Q}[\vec{x}]$ for all $1 \leq i,j \leq
n$. Without loss of generality, we assume that $n$ is the unique target state and that $n-1$ is the unique state that does not eventually reach $n$ with positive probability.
This is no
loss of generality since having such extremal reachability probabilities are
\emph{qualitative properties} that can be decided in polynomial time based on
the graph structure of the pMC only (see, e.g.,~\cite[Thm. 10.29 and Cor.
10.31]{DBLP:journals/iandc/BaierHHJKK20}) and all states that have the same extremal value can be ``merged'' while preserving reachability probabilities.  

Consider rational functions $\vec{g} = (g_1,\dots,g_n)$. We call these
\emph{(reachability) value functions} if they are a solution to the following
system of linear equations with polynomial coefficients. Intuitively, the $i^\text{th}$ value function $g_i$, for $1\leq i\leq n$, represents the probability of reaching state $n$ from state $i$. Below, $A$ is an $n
\times n$ matrix whose entries $A_{ij}$ are $p_{ij}$, for all $1 \leq i < n-1$
and all $1 \leq j \leq n$, and $0$ for $i \in \Set{n-1,n}$ and all $1 \leq j \leq n$.
\begin{equation}\label{eqn:reach}
    (\mathbb{I}_n - A)\vec{y} = \begin{bmatrix}
        \vec{0}\\
        1
    \end{bmatrix}
\end{equation}
The matrix $\mathbb{I}_n - A$ is guaranteed to be nonsingular for all graph-preserving
valuations (see, e.g.,~\cite[Theorem 10.19]{DBLP:books/daglib/0020348}) so
value functions exist and are even unique, up to extensional equivalence~\cite{DBLP:conf/ictac/Daws04}. Furthermore, they are known to be computable, e.g., using Gaussian
elimination~\cite{DBLP:journals/fac/LanotteMT07,DBLP:journals/iandc/BaierHHJKK20}.

\subsection{Monotonicity and being never worse}
We will now formally define two properties of states and their value functions. Namely, when the functions are \emph{monotonic} and when the value of a state is \emph{never worse} than that of another given state. We will continue with our fixed pMC $\mathcal{M} = (S,P)$ having $S = \Set{1,\dots,n}$ and $\vec{g} = (g_1,\dots,g_n)$ its value functions.

\paragraph{Monotonicity.} Let $i \in S$ be a state and $x_j$, for $1\leq j\leq m$, a parameter. We say that $\mathcal{M}$ is \textit{monotonically increasing}~\cite{monotonicity} from $i$ and in $x_j$, written $\moninc{\mathcal{M}}{x_j}{i}$, if $g_i(\vec{u})\leq g_i(\vec{v})$ for all graph-preserving valuations $\vec{u},\vec{v}$ such that $\vec{v} - \vec{u} = \varepsilon \vec{e_j}$ for some $\varepsilon \in\mathbb{R}_{\geq0}$, where $\vec{e_j}$ is the null vector with $1$ only in the $j^\text{th}$ position. Being \textit{monotonically decreasing} can be defined analogously.

\paragraph{The Never-Worse Relation.}
Let $i, j \in S$ be states. We say that $j$ is \textit{never-worse} than $i$, denoted $i\trianglelefteq j$, if $g_i(\vec{v})\leq g_j(\vec{v})$ for all graph-preserving valuations $\vec{v}$. The relation $\trianglelefteq$ is called the \emph{never-worse relation}~\cite{DBLP:conf/atva/EngelenPR23,DBLP:conf/fossacs/0001P18}, or NWR for short.

The natural decision problems associated with monotonicity and the NWR are coETR complete \cite{monotonicity,DBLP:conf/atva/EngelenPR23}. For monotonicity, the problem asks, given a pMC $\mathcal{M}$, a state $i$, and a parameter $x_j$, whether $\moninc{\mathcal{M}}{x_j}{i}$ holds. For the NWR, the problem asks, given a pMC $\mathcal{M}$, and two states $i$ and $j$, whether $i \trianglelefteq j$ holds. A clear relationship between the two problems has been established in one direction.

\begin{theorem}[From~{\cite[Lemma 2]{monotonicity}}]\label{thm:nwr2mon}
    There is a polynomial-time many-one reduction from the NWR problem to the problem.
\end{theorem}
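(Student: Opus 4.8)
The plan is to build a small gadget pMC $\mathcal{M}'$ that embeds a convex combination of the two states being compared, controlled by a single fresh parameter, so that monotonicity in that parameter exactly captures the never-worse comparison. First I would introduce a fresh parameter $x_{m+1}$ and a new state $s$ on top of $\mathcal{M}$. The transition function $P'$ of $\mathcal{M}'$ agrees with $P$ on all of $S$, and I add the two outgoing transitions $P'(s,j) = x_{m+1}$ and $P'(s,i) = 1 - x_{m+1}$, with $P'(s,k) \equiv 0$ for every other $k$. Since $s$ has no incoming edges from $S$ and $x_{m+1}$ occurs only on the two edges leaving $s$, every original value function is untouched, i.e. $g'_k = g_k$ for all $k \in S$; the target $n$ and the non-reaching state $n-1$ are inherited from $\mathcal{M}$, and $s$ plays the role of the state whose monotonicity we query.

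Next I would read off the value function at $s$. The single new equation of the system \Cref{eqn:reach} for $\mathcal{M}'$ gives $g'_s = x_{m+1}\, g'_j + (1 - x_{m+1})\, g'_i$, so that
\[
  g'_s = g_i + x_{m+1}\,(g_j - g_i).
\]
For any fixed graph-preserving valuation of $\vec{x}$, this is affine in $x_{m+1}$ with slope $g_j(\vec{x}) - g_i(\vec{x})$, which does not itself depend on $x_{m+1}$. Hence $g'_s$ is monotonically increasing in $x_{m+1}$ exactly when this slope is nonnegative for every admissible $\vec{x}$, that is, exactly when $g_i(\vec{x}) \le g_j(\vec{x})$ for all graph-preserving $\vec{x}$. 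This is the defining condition of $i \trianglelefteq j$, so I would conclude that $\moninc{\mathcal{M}'}{x_{m+1}}{s}$ holds if and only if $i \trianglelefteq j$.

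The correctness point I would verify most carefully is that the graph-preserving valuations of $\mathcal{M}'$ factor as a product. The only new constraints come from the two edges out of $s$, namely $0 < x_{m+1} \le 1$ and $0 < 1 - x_{m+1} \le 1$; hence $(\vec{x}, x_{m+1})$ is graph-preserving for $\mathcal{M}'$ iff $\vec{x}$ is graph-preserving for $\mathcal{M}$ and $x_{m+1} \in (0,1)$. This decomposition is precisely what lets me quantify over $x_{m+1}$ independently and equate the slope condition with the NWR; it is the main (and only real) obstacle, the rest being trivial algebra. Finally, $\mathcal{M}'$ adds one state, one parameter, and two transitions labeled by degree-one polynomials, so it is computable from $\mathcal{M}$ in time linear in $|\mathcal{M}|$, modulo routine re-indexing to keep $n$ and $n-1$ in the roles demanded by \Cref{eqn:reach}. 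This yields the desired polynomial-time many-one reduction from the NWR problem to the monotonicity problem.
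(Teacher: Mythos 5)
Your proposal is correct and matches the paper's own construction exactly: a fresh state with a fresh parameter $x_{m+1}$ routing to $j$ with probability $x_{m+1}$ and to $i$ with probability $1-x_{m+1}$, so that monotonicity in $x_{m+1}$ at the new state is equivalent to $i \trianglelefteq j$. The paper only gives a sketch; your added details (the affine form of $g'_s$ and the product decomposition of the graph-preserving valuations) are the right points to verify and are consistent with the intended argument.
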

\begin{proof}[Proof sketch]
    Add a new state $n+1$, a new parameter $x_{m+1}$, and two new transitions to the states for which we want to check the NWR: with probability $x_{m+1}$ we transition from $n+1$ to $j$; with $1-x_{m+1}$, from $n+1$ to $i$. The new pMC is monotonically increasing from $n+1$ in $x_{m+1}$ if and only if $i\trianglelefteq j$ in the original pMC.
\end{proof}

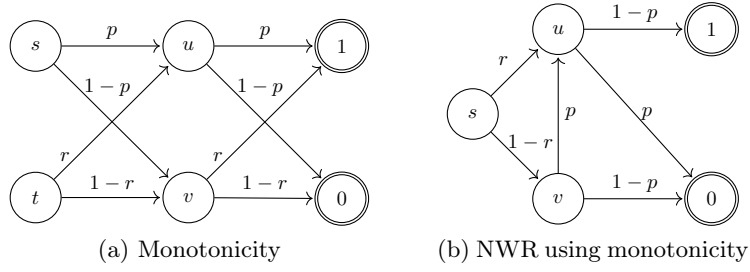
\begin{figure}
    \centering
\subfloat[Monotonicity\label{fig:just_mono}] {
    \begin{tikzpicture}[shorten >=1pt,auto,node distance=1.9 cm, scale = 0.7, transform shape]

        \node[state](s){$s$};
        \node[state](t)[below=of s]{$t$};
        \node[state](u)[right=of s]{$u$};
        \node[state](v)[below=of u]{$v$};
        \node[state,accepting](fin)[right=of u]{$1$};
        \node[state,accepting](fail)[below=of fin]{$0$};
                
        \path[->,auto] 
          (s)   edge node {$p$} (u)
                edge node [right,pos=0.2] {$1-p$} (v)
          (t)   edge node [left,pos=0.2] {$r$} (u)
                edge node {$1-r$} (v)
          (u)   edge node {$p$} (fin)
                edge node [right,pos=0.2] {$1-p$} (fail)
          (v)   edge node [left,pos=0.2] {$r$} (fin)
                edge node {$1-r$} (fail)
        ;
    \end{tikzpicture}
}\qquad
\subfloat[NWR using monotonicity\label{fig:NWR_using_mono}] {
    \begin{tikzpicture}[shorten >=1pt,auto,node distance=1.9 cm, scale = 0.7, transform shape]

        \node[state](s){$s$};
        \node[state](u)[above right=1.3cm of s]{$u$};
        \node[state](v)[below right=1.3cm of s]{$v$};
        \node[state,accepting](fin)[right=of u]{$1$};
        \node[state,accepting](fail)[right=of v]{$0$};
                
        \path[->,auto] 
          (s)   edge node {$r$} (u)
                edge node [right,pos=0.2] {$1-r$} (v)
          (u)   edge node {$1-p$} (fin)
                edge node [right] {$p$} (fail)
          (v)   edge node [right] {$p$} (u)
                edge node {$1-p$} (fail)
        ;
    \end{tikzpicture}
}
    \caption{Examples of pMCs with parameters $p$ and $r$.}
    \label{fig:monoNWR}
\end{figure}

\begin{example}
    \Cref{fig:just_mono} shows an example of a (simple) pMC with $6$ states where $1$ is the target state. It is easy to verify that the value functions $g_s=p^2+r-rp$ and $g_t=rp+r-r^2$ are monotonically increasing in $r$ and $p$ (respectively), but not in $p$ and $r$ (respectively). In \Cref{fig:NWR_using_mono}, we have $v\trianglelefteq u$ since $g_v=p-p^2<1-p=g_u$ for all graph preserving valuations. This relation can be witnessed from the state $s$ by checking monotonicity in $r$. In this case, $g_s=r(1-p)^2+p(1-p)$ is indeed monotonically increasing in $r$.
\end{example}

The reduction from \Cref{thm:nwr2mon} is useful because it allows any (approximation) algorithm developed for the monotonicity problem to be translated into one for the NWR. We would like to establish a relationship in the opposite direction. Towards that, we present the following reduction which assumes the given pMC $\mathcal{M}$ is simple and that it is paired with a \emph{partial-derivative pMC}.

\begin{theorem}\label{thm:conditional}
    Consider being given simple pMCs $\mathcal{M} = (S,P),\mathcal{M}'$ over $\vec{x}$ with value functions $\vec{g},\vec{g'}$, $i \in S$, and $0 \leq k \leq m$, such that $\frac{\partial g_i}{\partial x_k} = \beta + N g'_1$ where $N \in \mathbb{N}_{>0}, \beta \in \mathbb{Q}$. Then, there is a polynomial-time (in the size of $\mathcal{M}'$) many-one reduction from the monotonicity problem for $\mathcal{M}$ to the NWR problem.
\end{theorem}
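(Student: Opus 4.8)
The plan is to turn the monotonicity query $\moninc{\mathcal{M}}{x_k}{i}$ into a single inequality comparing $g'_1$ with a rational constant, and then to realize that inequality as an NWR query inside a mildly enlarged copy of $\mathcal{M}'$. The first step is to replace monotonicity by a sign condition on a partial derivative. Since $\mathcal{M}$ is simple, its graph-preserving valuations are exactly the convex set $(0,1)^m$, and on this set $g_i$ is a rational function whose denominator does not vanish, because $\mathbb{I}_n - A$ is nonsingular at every graph-preserving valuation; hence $g_i$ is $C^1$ there. On a convex domain a $C^1$ function is nondecreasing along the $x_k$-axis if and only if its partial derivative with respect to $x_k$ is nonnegative throughout. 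I would prove the forward direction by the mean value theorem along the axis-parallel segment (which stays inside the domain) and the converse by continuity of the derivative (a negative value at a point would persist on a neighborhood, contradicting monotonicity). Thus $\moninc{\mathcal{M}}{x_k}{i}$ holds iff $\frac{\partial g_i}{\partial x_k}(\vec{v}) \ge 0$ for all $\vec{v}\in(0,1)^m$.

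Next I would substitute the hypothesis $\frac{\partial g_i}{\partial x_k} = \beta + N g'_1$. As $N>0$, nonnegativity of the derivative is equivalent to $g'_1(\vec{v}) \ge c$ for all $\vec{v}\in(0,1)^m$, where $c := -\beta/N \in \mathbb{Q}$ has bit-size polynomial in the input. This already has the form of an NWR statement ``$g_a \le g'_1$'' provided I can exhibit a state $a$ whose value function is the constant $c$. For $0 < c \le 1$ I would augment $\mathcal{M}'$ with one fresh state $a$ carrying two constant transitions: to the target with probability $c$ and to the non-reaching sink with probability $1-c$, so that $g_a \equiv c$, while the value functions of all original states are unchanged because $a$ is not reachable from them. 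Constant transitions impose no constraint on the parameters, so the graph-preserving valuations of the enlarged pMC $\mathcal{M}''$ remain exactly $(0,1)^m$ and $\mathcal{M}''$ is again simple. Consequently $a \trianglelefteq 1$ in $\mathcal{M}''$ iff $g'_1 \ge c$ everywhere, i.e. iff $\moninc{\mathcal{M}}{x_k}{i}$, which is the desired NWR instance.

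I expect the only genuine subtlety to be the treatment of an out-of-range constant $c$, so I would dispatch it up front using that every value function is a reachability probability and hence $0 \le g'_1 \le 1$ on $(0,1)^m$. If $c \le 0$ then $g'_1 \ge 0 \ge c$ always holds, and I output a fixed trivially-true NWR instance; if $c > 1$ then $g'_1 \le 1 < c$ never holds, so monotonicity fails and I output a fixed trivially-false NWR instance; and for $0 < c \le 1$ the augmentation above applies verbatim. Which branch is taken is decided by inspecting $\beta$ and $N$ alone, and the produced NWR instance has size linear in $|\mathcal{M}'|$ plus the encoding of $c$, giving the claimed polynomial-time many-one reduction. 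The main obstacle is therefore not the construction but the clean justification of the derivative-sign characterization of monotonicity (which is where simplicity is essential, to secure a convex, denominator-nonvanishing domain) together with the disciplined case analysis on $c$.
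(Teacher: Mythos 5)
Your proposal is correct and follows essentially the same route as the paper's proof: characterize monotonicity by the sign of $\frac{\partial g_i}{\partial x_k}$ on the (connected/convex) domain $(0,1)^m$, reduce to the threshold question $g'_1 \ge -\beta/N$, dispatch the out-of-range cases of $-\beta/N$ up front, and otherwise add a fresh state with constant transition probabilities $-\beta/N$ to the target and $1+\beta/N$ to the sink, then ask whether that state is never worse than state $1$. The only differences are cosmetic: you spell out the mean-value-theorem justification of the derivative-sign characterization that the paper delegates to a citation, and you fold the boundary case $c=1$ into the construction rather than treating it separately as the paper does via the check ``$g'_1\equiv 1$''.
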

\begin{proof}
    Since $\mathcal{M}$ is simple, its set of graph-preserving valuations is connected (in the topological sense). Therefore,  $\moninc{\mathcal{M}}{i}{x_k}$ holds if and only if $\frac{\partial g_i}{\partial x_k}(\vec{v}) \geq 0$ for all graph-preserving valuations $\vec{v}$ (cf.~\cite[Def. 1]{monotonicity} and~\cite[Def. 2]{DBLP:conf/tacas/SpelJK21}). Because $\mathcal{M}'$ is also simple, the sets of graph-preserving valuations of the two pMCs coincide. Hence, to check for monotonicity in $\mathcal{M}$ we focus on the value function $g'_1$ of $\mathcal{M}'$. 
    
    Now, if $\frac{\beta}{N} \geq 0$ we will have monotonicity and if $\frac{\beta}{N} \leq -1$ we can conclude the opposite (unless $g'_1 = 1$). Since all of the latter can be checked in polynomial time, we suppose $-1 < \frac{\beta}{N} < 0$, which is the same as $0 < \frac{-\beta}{N} < 1$. In this case, monotonicity holds if and only if $g'_1(\vec{v}) \geq \frac{-\beta}{N}$ for all graph-preserving valuations $\vec{v} \in (0,1)^m$. This concludes the reduction because that property is straightforward to encode as an instance of the NWR problem: we can add a new state $s$ to $\mathcal{M}'$ that transitions with probability $\frac{-\beta}{N}$ to the target and $1+\frac{\beta}{N}$ to the sink, and then ask whether $s \trianglelefteq 1$.
\end{proof}
We observe that the proof of the theorem does not crucially rely on $\mathcal{M}$ being a simple pMC. The same argument can be used for any pMC whose graph-preserving valuations form a connected subset of $(0,1)^m$ if it is definable via a system of strict polynomial inequalities $\vec{0} < \vec{A} < \vec{1}$. Then, the entries of $\vec{A}$ could be put as transition probabilities between new ``dummy'' states in $\mathcal{M}'$ to constrain its set of graph-preserving valuations.

Unfortunately, we do not know whether we incur in a loss of generality because of the assumption that the partial-derivative pMC is given. In the sequel, we do provide a quasi-polynomial construction for a simple partial-derivative pMC. Together with \Cref{thm:conditional}, this gives us a quasi-polynomial reduction from the monotonicity problem to the NWR problem for simple pMCs.

\section{Computing representations of value functions}
In this section, we recall the main ideas behind how one can compute the value functions of a given pMC. We argue that a suitable polynomial-size representation of them via circuits can be efficiently computed by symbolically executing the known algorithms. Then, we use this and a connection between circuits and yet another representation of polynomial functions to give a quasi-polynomial construction of partial-derivative pMCs.

\subsection{Value functions via one-step division-free Gaussian elimination}
Using one-step division-free Gaussian elimination~\cite{bareiss}, as proposed
in~\cite[Section 3.2]{DBLP:journals/iandc/BaierHHJKK20}, \Cref{eqn:reach} can
be put into \emph{row echelon form} while guaranteeing all entries are still
polynomials. In fact, we obtain the following system
of linear equations with a diagonal matrix of polynomial coefficients that is equivalent to
\Cref{eqn:reach} (see~\cite[Algorithm 1]{DBLP:journals/iandc/BaierHHJKK20}).
\begin{equation}\label{eqn:row-echelon}
  \begin{bmatrix}
    a_{11} & \dots & 0\\
    \vdots & \ddots & \vdots\\
    0 & \dots & a_{nn}
  \end{bmatrix}
  \vec{y} = \begin{bmatrix}
    b_1\\
    \vdots\\
    b_n
  \end{bmatrix}
\end{equation}
Furthermore, the $a_{ii}$ and $b_{i}$ are all polynomials and their degree is at most
$O(nd)$, where $d = \max_{i,j} \deg{p_{ij}}$ (see~\cite[Proof of Lem.
3]{DBLP:journals/iandc/BaierHHJKK20}). The $a_{ii}$ and $b_{i}$ are
obtained from the $p_{ij}$ polynomials by applying addition, multiplication,
subtraction, and division --- $O(n^3)$ times, to be precise. Hence, for each $a_{ii}$ or $b_{i}$ we can
construct a circuit $\mathcal{C}$ over
$\mathbb{Q}(\vec{x})$ of polynomial size, in $n$ and $d$, such that the polynomial is computed by $\mathcal{C}$. Since $b_{i} / a_{ii}$, for all $1 \leq i \leq n$, is a solution to the system, appealing to \Cref{thm:no-div}, followed by \Cref{thm:depth-redux}, we obtain the following.
\begin{theorem}\label{thm:circuit-val}
    Let $\mathcal{M} = (S,P)$ be a pMC with $S = \Set{1,\dots,n}$ and write $d = \max_{i,j} \deg{p_{ij}}$. Then, we can construct two families of circuits $\mathcal{N}_i, \mathcal{D}_i$ over $\mathbb{Q}[\vec{x}]$ of size $|\mathcal{M}|^{O(1)}$ and depth $O(\log(|\mathcal{M}|) \log(nd))$ such that $\sem{\mathcal{N}_i}/\sem{\mathcal{D}_i}$ is the $i$-th value function $g_i$, for all $1 \leq i \leq n$.
\end{theorem}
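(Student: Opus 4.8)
The plan is to chain together the machinery already set up in the preliminaries, tracking circuit size and depth through each transformation. The key observation stated in the running text is that one-step division-free Gaussian elimination (Bareiss) puts \Cref{eqn:reach} into the diagonal form \Cref{eqn:row-echelon}, where each $a_{ii}$ and $b_i$ is obtained from the $p_{ij}$ by $O(n^3)$ ring-plus-division operations, and each has degree $O(nd)$. So the proof has three movements: (i) build a rational-function circuit computing each $a_{ii}$ and $b_i$; (ii) eliminate divisions via \Cref{thm:no-div} to get a polynomial circuit; (iii) reduce depth via \Cref{thm:depth-redux}. The target value function is $g_i = b_i / a_{ii}$, so $\mathcal{N}_i$ should compute $b_i$ and $\mathcal{D}_i$ should compute $a_{ii}$.

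First I would construct, for each $a_{ii}$ and $b_i$, a circuit over $\mathbb{Q}(\vec{x})$ by \emph{symbolically executing} the Bareiss algorithm: start with input gates for each coefficient of each $p_{ij}$ (there are at most $|\mathcal{M}|$ such coefficients, encoded with $O(\log \coeff{p_{ij}})$ bits, which we realize with $O(\deg{p_{ij}} \log \coeff{p_{ij}})$ gates as noted before \Cref{thm:depth-redux}), and then allocate one gate per arithmetic operation performed by the algorithm. Since the algorithm performs $O(n^3)$ operations from $\Set{+, -, \times, \div}$, the resulting circuit has size polynomial in $n$ and in $|\mathcal{M}|$; in particular $|\mathcal{M}|^{O(1)}$ after folding in the gate cost of representing the input polynomials. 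Call this circuit $\mathcal{C}_i$ for the denominator and numerator respectively. Crucially, the degree of the polynomial each $\mathcal{C}_i$ computes is $d_i = O(nd)$ by the cited degree bound, even though the raw circuit may contain division gates.

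Next I would apply \Cref{thm:no-div} to each $\mathcal{C}_i$: since $\mathcal{C}_i$ computes a polynomial (namely $a_{ii}$ or $b_i$) of degree $O(nd)$, we obtain an equivalent division-free circuit $\mathcal{C}_i'$ over $\mathbb{Q}[\vec{x}]$ of size $(|\mathcal{C}_i| \cdot nd)^{O(1)} = |\mathcal{M}|^{O(1)}$. Finally I would apply \Cref{thm:depth-redux} to $\mathcal{C}_i'$, whose computed polynomial has degree $D = O(nd)$; this yields a circuit of size $O((|\mathcal{C}_i'| D^2)^3) = |\mathcal{M}|^{O(1)}$ and depth $O(\log(|\mathcal{C}_i'| D)\log D) = O(\log(|\mathcal{M}|)\log(nd))$, matching the claimed bounds. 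Setting $\mathcal{N}_i, \mathcal{D}_i$ to be the depth-reduced circuits for $b_i, a_{ii}$ respectively completes the construction, and $g_i = \sem{\mathcal{N}_i}/\sem{\mathcal{D}_i}$ follows from $g_i = b_i / a_{ii}$.

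The main obstacle I anticipate is bookkeeping, not any deep difficulty: I must ensure the polynomial size bound is genuinely polynomial in $|\mathcal{M}|$ and not merely in $n$ and $d$ separately, which requires care since $|\mathcal{M}| = \sum_{i,j} \reps{p_{ij}}$ couples support size, degree, and coefficient magnitude. In particular, I need the degree bound $\deg{a_{ii}}, \deg{b_i} = O(nd)$ to feed correctly into both \Cref{thm:no-div} and \Cref{thm:depth-redux}, since both theorems' guarantees are stated in terms of the degree of the \emph{computed} polynomial rather than the syntactic degree of the circuit; the cited proof of the Baier et al. degree lemma supplies exactly this. A secondary subtlety is that \Cref{thm:no-div} presupposes the circuit computes a polynomial (not a genuine rational function), which holds here precisely because each $a_{ii}, b_i$ \emph{is} a polynomial by the correctness of division-free elimination --- so I would invoke that correctness explicitly before applying the theorem.
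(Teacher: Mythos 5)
Your proposal is correct and follows essentially the same route as the paper: symbolically execute one-step division-free (Bareiss) Gaussian elimination to get polynomial-size circuits over $\mathbb{Q}(\vec{x})$ for the $a_{ii}$ and $b_i$, use the $O(nd)$ degree bound from Baier et al.\ to apply Strassen's division elimination, and then apply the Valiant et al.\ depth reduction, taking $\mathcal{N}_i$ and $\mathcal{D}_i$ to be the resulting circuits for $b_i$ and $a_{ii}$. The paper presents this argument only as the running text preceding the theorem statement, and your write-up matches it step for step, including the two subtleties you flag (the degree bound feeding into both cited theorems, and the fact that each $a_{ii}, b_i$ is genuinely a polynomial so division elimination applies).
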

This means that, while the sum-of-products representation of value functions may be exponential~\cite[Lemma 2]{DBLP:journals/iandc/BaierHHJKK20}, their circuit representation is always small!

Before we close this subsection, we leverage the previous theorem to establish
that also the partial derivatives of the value functions always have a small
circuit representation. For this, we abuse our notation for circuits to allow
for multiple (ordered) output gates so that a circuit can compute a vector-valued function. We need the following result about
partial derivatives of functions as circuits.
\begin{theorem}[From~\cite{DBLP:journals/tcs/BaurS83,DBLP:journals/sigact/Morgenstern85}]\label{thm:part-der}
    Let $\mathcal{C}$ be a circuit over $\mathbb{Q}(\vec{x})$ that computes a (scalar) function $f$. Then, we can construct a circuit $\mathcal{C}'$ over $\mathbb{Q}(\vec{x})$ of size $O(|\mathcal{C}|)$ and depth $O(\depth{\mathcal{C}})$ that  computes $(f, \frac{\partial f}{\partial x_1},\dots,\frac{\partial f}{\partial x_m})$.
\end{theorem}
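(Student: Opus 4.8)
The plan is to prove \Cref{thm:part-der} by the standard \emph{reverse-mode} (adjoint) differentiation technique underlying the Baur--Strassen theorem. First I would fix a topological order $v_1, \dots, v_N = o$ of the gates of $\mathcal{C} = (V,E,\ell,o)$, so that the original circuit already computes every forward value $\sem{v_k}$. The idea is to augment $\mathcal{C}$ with, for each gate $v$, an \emph{adjoint} gate computing $\bar{v} \coloneqq \frac{\partial \sem{o}}{\partial \sem{v}}$, i.e. the partial derivative of the output with respect to the value carried by $v$ (treated as an independent quantity). These adjoints are computed in \emph{reverse} topological order: we seed $\bar{o} = 1$, and for every other gate $v$ the chain rule gives $\bar{v} = \sum_{w \,:\, (v,w)\in E} \bar{w}\cdot \frac{\partial \sem{w}}{\partial \sem{v}}$, a sum over the successors $w$ of $v$. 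Once all adjoints are available, the desired outputs are read off directly: $\frac{\partial f}{\partial x_i}$ is the sum of $\bar{v}$ over all input gates $v$ with $\ell(v) = x_i$ (and $f$ itself is $\sem{o}$), giving the claimed vector-valued circuit $\mathcal{C}'$.

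The next step is to instantiate the local partials $\frac{\partial \sem{w}}{\partial \sem{v}}$ for each gate type, each requiring only $O(1)$ extra gates reusing already-computed forward values. If $w = u + v$ is an addition gate, the local partial is the constant $1$; if $w = u \times v$ is a multiplication gate, it is the sibling value $\sem{u}$; and since we work over $\mathbb{Q}(\vec{x})$ we may also treat division gates $w = u / v$, whose local partials are $1/\sem{v}$ with respect to $u$ and $-\sem{u}/\sem{v}^2$ with respect to $v$, again expressible with a constant number of multiplication and division gates over forward values already present in $\mathcal{C}$. Thus each \emph{edge} of $\mathcal{C}$ contributes one multiplication (the local partial times the successor adjoint) together with one addition into the running accumulator for $\bar{v}$.

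For the size bound I would use the fanin-$2$ structure: every computation gate has indegree $2$, so $|E| = O(|V|) = O(|\mathcal{C}|)$, and since each edge spawns only $O(1)$ new gates the augmented circuit has size $|\mathcal{C}'| = O(|\mathcal{C}|)$, with all forward values shared from the original circuit rather than recomputed. The depth bound is where I expect the real obstacle to lie. The adjoint-dependency DAG is essentially the edge-reversal of $\mathcal{C}$ glued to the forward circuit, so along any single reversed path the adjoints accumulate with depth proportional to a forward path, and the forward values feeding the local partials have depth at most $\depth{\mathcal{C}}$. The delicate point is the \emph{fanout summation}: a gate of large fanout forces $\bar{v}$ to be a sum of many terms, and one must check that these summations, chained along a path, do not inflate the depth beyond $O(\depth{\mathcal{C}})$. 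I would resolve this by accumulating the summands in a layered fashion keyed to the successors' positions in the reverse topological order, so that the additions for a gate proceed in parallel with (rather than serially after) the computations of its successors' adjoints; this keeps every path in $\mathcal{C}'$ within a constant factor of the longest path in $\mathcal{C}$. This layered accumulation argument is exactly what the cited results of~\cite{DBLP:journals/tcs/BaurS83,DBLP:journals/sigact/Morgenstern85} establish, yielding $\depth{\mathcal{C}'} = O(\depth{\mathcal{C}})$ and completing the construction.
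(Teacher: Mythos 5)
The paper does not actually prove \Cref{thm:part-der}; it imports the statement wholesale from the cited references, so there is no in-paper proof to compare against. Your reverse-mode (adjoint) construction is precisely the standard argument behind those references, and it is correct in substance: the forward circuit is reused, each edge contributes $O(1)$ gates for the local partial and the accumulation, division gates are handled correctly via $1/\sem{v}$ and $-\sem{u}/\sem{v}^2$, and the size bound $O(|\mathcal{C}|)$ follows from fanin $2$ giving $|E| = O(|\mathcal{C}|)$. You also correctly isolate the one genuinely delicate point, the depth cost of summing $\bar{v} = \sum_{w} \bar{w}\cdot\partial\sem{w}/\partial\sem{v}$ over a large fanout; but there you assert the conclusion (``keeps every path within a constant factor'') and defer to the citations rather than arguing it. To close that gap: a per-gate balanced summation tree only yields depth $O(\depth{\mathcal{C}}^2)$, since $\log(\mathrm{fanout}(v))$ can be as large as the height of $v$. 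The fix is to combine the summands of each adjoint as they become available --- $k$ values ready at depths $t_1,\dots,t_k$ can be summed by depth $\lceil \log_2 \sum_j 2^{t_j}\rceil$ --- which unrolls to $2^{\delta(v)} \le 2^{O(\depth{\mathcal{C}})}\cdot N_v$ with $N_v$ the number of $v$-to-output paths; fanin $2$ forces $N_v \le 2^{\depth{\mathcal{C}}}$, giving $\delta(v) = O(\depth{\mathcal{C}})$ as claimed (the same bound covers the final sum of $\bar{v}$ over all input gates labelled $x_i$, since $|\mathcal{C}| \le 2^{\depth{\mathcal{C}}+1}$). With that scheduling argument spelled out, your proof is complete and self-contained.
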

Now, we can use the construction from \Cref{thm:part-der} on the circuits representing the numerator and denominator of the value functions in \Cref{thm:circuit-val}. Then, using the quotient rule, we obtain the following.
\begin{theorem}\label{thm:part-der-val}
    Let $\mathcal{M} = (S,P)$ be a pMC with $S = \Set{1,\dots,n}$ and write $d = \max_{ij} \deg{p_{ij}}$. Then, we can construct two families of circuits $\mathcal{N}_i,\mathcal{D}_i$ over $\mathbb{Q}[\vec{x}]$ of size $|\mathcal{M}|^{O(1)}$ and depth $O(\log(|\mathcal{M}|)\log(nd))$ such that $\sem{\mathcal{N}_i} / \sem{\mathcal{D}_i}$ gives $g_i$ and its partial derivatives $(g_i, \frac{\partial g_i}{\partial x_1},\dots,\frac{\partial g_i}{\partial x_m})$, for all $1 \leq i \leq n$.
\end{theorem}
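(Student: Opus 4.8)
The plan is to follow the recipe sketched just before the statement: start from the circuit representation of each value function provided by \Cref{thm:circuit-val}, differentiate its numerator and denominator circuits with the Baur--Strassen construction of \Cref{thm:part-der}, and then glue the pieces together with the quotient rule. Concretely, \Cref{thm:circuit-val} already hands us, for each $1 \le i \le n$, circuits $\mathcal{N}_i, \mathcal{D}_i$ over $\mathbb{Q}[\vec{x}]$ of size $|\mathcal{M}|^{O(1)}$ and depth $O(\log(|\mathcal{M}|)\log(nd))$ with $g_i = \sem{\mathcal{N}_i}/\sem{\mathcal{D}_i}$. First I would apply \Cref{thm:part-der} separately to $\mathcal{N}_i$ and to $\mathcal{D}_i$. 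This yields two circuits computing the vectors $(\sem{\mathcal{N}_i}, \frac{\partial \sem{\mathcal{N}_i}}{\partial x_1}, \dots, \frac{\partial \sem{\mathcal{N}_i}}{\partial x_m})$ and $(\sem{\mathcal{D}_i}, \frac{\partial \sem{\mathcal{D}_i}}{\partial x_1}, \dots, \frac{\partial \sem{\mathcal{D}_i}}{\partial x_m})$, of size $O(|\mathcal{N}_i|)$ resp. $O(|\mathcal{D}_i|)$ and depth only a constant factor larger than before.

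Next I would assemble the output circuits. Using the quotient rule, $\frac{\partial g_i}{\partial x_k} = \bigl((\frac{\partial \sem{\mathcal{N}_i}}{\partial x_k}) \sem{\mathcal{D}_i} - \sem{\mathcal{N}_i}(\frac{\partial \sem{\mathcal{D}_i}}{\partial x_k})\bigr)/\sem{\mathcal{D}_i}^2$, so I would take $\sem{\mathcal{D}_i}^2$ as a common denominator for the whole output vector. The new denominator circuit $\mathcal{D}_i'$ simply multiplies the output of $\mathcal{D}_i$ by itself; the new numerator circuit $\mathcal{N}_i'$ has $m+1$ output gates: one computing $\sem{\mathcal{N}_i}\sem{\mathcal{D}_i}$, so that the $0$-th component recovers $g_i = \sem{\mathcal{N}_i}\sem{\mathcal{D}_i}/\sem{\mathcal{D}_i}^2$, and, for each $1 \le k \le m$, one computing $(\frac{\partial \sem{\mathcal{N}_i}}{\partial x_k})\sem{\mathcal{D}_i} - \sem{\mathcal{N}_i}(\frac{\partial \sem{\mathcal{D}_i}}{\partial x_k})$ by reading off the appropriate outputs of the two differentiated circuits and wiring in a constant number of extra $\times$ and $-$ gates. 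Then $\sem{\mathcal{N}_i'}/\sem{\mathcal{D}_i'}$ is, componentwise, exactly $(g_i, \frac{\partial g_i}{\partial x_1}, \dots, \frac{\partial g_i}{\partial x_m})$.

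For the complexity bounds, the gluing adds $O(m)$ new gates arranged in $O(1)$ additional layers. Since every parameter that influences a value function occurs in some transition polynomial, we may assume $m \le |\mathcal{M}|$, and hence the total size remains $|\mathcal{M}|^{O(1)}$ while the depth stays $O(\log(|\mathcal{M}|)\log(nd))$, the differentiated subcircuits and the constant-depth quotient-rule layer both respecting these bounds.

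The step I expect to be the most delicate is keeping the circuits division-free, i.e. genuinely over $\mathbb{Q}[\vec{x}]$ as the statement demands: \Cref{thm:part-der} is phrased over $\mathbb{Q}(\vec{x})$, so I would need to observe that applying it to the division-free circuits $\mathcal{N}_i, \mathcal{D}_i$ produces division-free circuits (the partial derivatives of polynomials are again polynomials, and the Baur--Strassen backward pass introduces only $+,-,\times$ gates on a division-free forward pass), and that the quotient-rule layer likewise uses no division. The rest is bookkeeping: aligning the single common denominator $\sem{\mathcal{D}_i}^2$ across all $m+1$ components so that $\mathcal{N}_i'$ and $\mathcal{D}_i'$ form the claimed pair.
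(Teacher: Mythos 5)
Your proposal is correct and matches the paper's own justification, which is exactly the two-step recipe stated in the surrounding text: apply \Cref{thm:part-der} to the numerator and denominator circuits from \Cref{thm:circuit-val}, then combine via the quotient rule with $\sem{\mathcal{D}_i}^2$ as the common denominator. Your additional remarks on division-freeness and the size/depth bookkeeping are consistent with (and slightly more explicit than) what the paper records.
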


\subsection{Quasi-polynomial partial-derivative chains}
In this subsection we aim at an extension of~\Cref{thm:part-der-val} where we get, for a value function $g_i$ and a given variable $x_j$,
a pMC such that one of its value functions corresponds to $\frac{\partial g_i}{\partial x_j}$ (cf.~\cite[Section 3.2]{DBLP:conf/vmcai/HeckSJMK22}, where the authors construct parametric weighted automata computing the partial derivatives). 
To realize this, we use a classical result due to Ben-Or and Cleve~\cite{DBLP:journals/siamcomp/Ben-OrC92} to go from the circuit representation of the partial derivative to something closer to pMCs. Namely, we obtain an \emph{algebraic branching program}.

An algebraic branching program (ABP, for short)~\cite{DBLP:conf/stoc/Nisan91} over $\mathbb{Q}[\vec{x}]$ is a labeled DAG $\mathcal{A} = (V,E,\ell)$ with $d+1$ layers of vertices $V$. This means that $V$ is partitioned into $V_0, \dots, V_d$ such that $(u,v) \in E$ implies $u \in V_i$ and $v \in V_{i+1}$ for some $0 \leq i < d$. The \emph{size} of $\mathcal{A}$ is $|V|$; its \emph{width} is the cardinality of the largest layer, $\max_{i=0}^d |V_i|$. We will be interested in the single-source single-sink kind of ABP, meaning the first and last layers have a single vertex. Every edge $e \in E$ is labeled by a linear term $\ell(e)$ over $\vec{x}$, i.e. $\ell: E \to \mathbb{Q}[\vec{x}]$ and $\deg{\ell(e)} \leq 1$ for all $e \in E$. The linear terms are encoded just like polynomials are in pMCs (as sum of products with binary encoding for coefficients and constants, unary for exponents).
Write $P(\mathcal{A})$ for the set of all maximal paths in $\mathcal{A}$. We say that $\mathcal{A}$ computes the polynomial $\sem{\mathcal{A}} = \sum_{\pi \in P(\mathcal{A})} \prod_{e \in \pi} \ell(e)$. Just like in circuits, each vertex $v$ computes a polynomial $\sem{v}$, i.e. taking them as the single source of the ABP.
The result below links circuits and ABPs in the required direction.
\begin{theorem}[From~{\cite[Theorem 1]{DBLP:journals/siamcomp/Ben-OrC92}}]\label{thm:ben-or-cleve}
    Let $\mathcal{C}$ be a circuit over $\mathbb{Q}[\vec{x}]$. Then, we can construct an ABP $\mathcal{A}$ of size $O(4^{\depth{\mathcal{C}}})$ and width $4$ that computes the same polynomial.
\end{theorem}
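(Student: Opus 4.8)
The plan is to reconstruct the Ben--Or--Cleve simulation, whose heart is the observation that multiplication can be realized as a \emph{commutator} of elementary matrices. For a scalar $c$ and distinct indices $i,j$, let $E_{ij}(c) = I + c\,e_{ij}$ denote the $3 \times 3$ elementary (transvection) matrix that agrees with the identity except in entry $(i,j)$, where it carries $c$. These satisfy the additive law $E_{ij}(a)E_{ij}(b) = E_{ij}(a+b)$ and, crucially, the commutator identity
$$E_{12}(a)\,E_{23}(b)\,E_{12}(-a)\,E_{23}(-b) = E_{13}(ab).$$
Since a width-$w$ ABP computes exactly a fixed entry of a product of $w \times w$ matrices whose entries are the edge labels, a program written as a product of such elementary matrices translates directly into an ABP, with each factor contributing one layer.

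First I would fix the invariant and argue by induction on the depth of a gate $v$ computing $f_v = \sem{v}$: for every off-diagonal slot $(i,j)$ we can build a sequence of elementary factors whose product equals $E_{ij}(f_v)$, together with the reverse-and-negate sequence computing its inverse $E_{ij}(-f_v)$. The base case is immediate, as an input gate labeled by $x_k$ or a constant is a single factor $E_{ij}(x_k)$ or $E_{ij}(c)$, whose labels are linear as an ABP requires. For an addition gate $v = u+w$, concatenating the programs for $E_{ij}(f_u)$ and $E_{ij}(f_w)$ yields $E_{ij}(f_u+f_w)$ by the additive law. For a multiplication gate $v = u \cdot w$, I would instantiate the subprograms for $E_{12}(f_u)$ and $E_{23}(f_w)$ and their inverses, then splice them in the order dictated by the commutator identity to obtain $E_{13}(f_u f_w)$; a relabeling of the three coordinates (conjugation by a permutation matrix) then moves the result into whichever slot $(i,j)$ the parent requests.

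The size bound falls out of a one-line recurrence. Writing $S(d)$ for the number of elementary factors needed at a gate of depth $d$, the addition case costs $S(d) \le 2\,S(d-1)$ while the multiplication case costs $S(d) \le 4\,S(d-1)$, since the commutator consumes two copies of each child (the program and its inverse). With $S(0)=1$ this gives $S(d) = O(4^{\depth{\mathcal{C}}})$, and the resulting ABP has $S(d)+1$ layers. Although $\mathcal{C}$ is a DAG rather than a tree, the recursion simply unfolds each reused gate, so the depth-based bound is unaffected. Reading the product of elementary matrices as a layered graph---one vertex per coordinate per layer, edges labeled by the corresponding matrix entries---produces an ABP of constant width (three coordinates suffice, comfortably within the stated bound of four) computing the $(1,3)$ entry, namely $\sem{\mathcal{C}}$; prepending a source and appending a sink joined by unit-labeled edges makes it single-source single-sink.

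The step I expect to be the main obstacle is the multiplication case: ensuring that each child's subprogram is available in exactly the off-diagonal slot the commutator consumes, and that its inverse is on hand, demands disciplined bookkeeping of which coordinates play which role at each stage. It is here---rather than in the routine additive and relabeling steps---that both the factor of $4$ in the size and the need for more than two coordinates genuinely enter, and where I would be most careful to verify that the spliced factors telescope to precisely $E_{13}(f_u f_w)$ rather than leaving residual off-diagonal terms.
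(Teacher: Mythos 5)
The paper states this result without proof, importing it directly from Ben-Or and Cleve, and your reconstruction is a faithful and correct account of their original argument: the commutator identity $E_{12}(a)E_{23}(b)E_{12}(-a)E_{23}(-b)=E_{13}(ab)$, the additive law for the sum gates, the $S(d)\leq 4S(d-1)$ recurrence, and the translation of a product of $3\times 3$ elementary matrices into a constant-width layered ABP (well within the stated width bound of $4$). Your remark that unfolding the DAG into a formula leaves the depth-based $O(4^{\depth{\mathcal{C}}})$ bound intact is exactly the point needed to apply the formula-based original to the circuits used here, so no gap remains.
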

Using the result above on \Cref{thm:part-der-val}, we get an ABP as stated below.
\begin{theorem}\label{thm:pmc2pds}
Let $\mathcal{M} = (S,P)$ be a pMC with $S = \Set{1,\dots,n}$ and write $d = \max_{ij} \deg{p_{ij}}$. Then, we can construct two families of ABPs $\mathcal{N}_i,\mathcal{D}_i$ over $\mathbb{Q}[\vec{x}]$ of size $|\mathcal{M}|^{O(\log(nd))}$ and width $4$ such that $\sem{\mathcal{N}_i} / \sem{\mathcal{D}_i}$ gives the partial derivatives $(\frac{\partial g_i}{\partial x_1},\dots,\frac{\partial g_i}{\partial x_m})$ of $g_i$, for all $1 \leq i \leq n$.
\end{theorem}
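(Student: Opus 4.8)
The plan is to obtain this statement as a corollary that chains \Cref{thm:part-der-val} with the Ben-Or--Cleve simulation of \Cref{thm:ben-or-cleve}, the only real work being a size computation. First I would invoke \Cref{thm:part-der-val} to obtain, for each $1 \leq i \leq n$, circuits $\mathcal{N}_i, \mathcal{D}_i$ over $\mathbb{Q}[\vec{x}]$ of size $|\mathcal{M}|^{O(1)}$ and depth $O(\log(|\mathcal{M}|)\log(nd))$ whose componentwise quotient computes the vector $(g_i, \frac{\partial g_i}{\partial x_1}, \dots, \frac{\partial g_i}{\partial x_m})$. The key point to record at this stage is that these circuits are division-free, i.e. genuinely over the polynomial ring $\mathbb{Q}[\vec{x}]$, which is exactly the hypothesis that \Cref{thm:ben-or-cleve} demands; this was arranged earlier by pushing divisions to the top via \Cref{thm:no-div}.

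Next I would process the output gates one at a time. Since the Ben-Or--Cleve construction is stated for a circuit with a single output, for each component I would isolate the scalar sub-circuit feeding the corresponding output gate of $\mathcal{N}_i$ --- whose depth is bounded by $\depth{\mathcal{N}_i}$ --- and apply \Cref{thm:ben-or-cleve} to turn it into a single-source single-sink ABP of width $4$. Performing the same on the denominator circuits $\mathcal{D}_i$, and simply discarding the component computing $g_i$ itself (which we no longer need), yields the two families of width-$4$ ABPs whose componentwise quotient is $(\frac{\partial g_i}{\partial x_1}, \dots, \frac{\partial g_i}{\partial x_m})$. Note that the width $4$ in the conclusion is inherited verbatim from \Cref{thm:ben-or-cleve} and is independent of all other parameters.

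The remaining step, and the place where the quasi-polynomial bound appears, is the size arithmetic. \Cref{thm:ben-or-cleve} produces an ABP of size $O(4^{\depth{\mathcal{C}}})$, so substituting the depth bound gives $O(4^{c\log(|\mathcal{M}|)\log(nd)})$ for some constant $c$. Using $4^{\log_2(|\mathcal{M}|)} = |\mathcal{M}|^2$, this rewrites as $|\mathcal{M}|^{2c\log(nd)} = |\mathcal{M}|^{O(\log(nd))}$, matching the claimed bound. I expect the substantive (though still modest) obstacles to be exactly these bookkeeping points: verifying that the exponential-in-depth blow-up of Ben-Or--Cleve composed with the $O(\log(|\mathcal{M}|)\log(nd))$ depth of \Cref{thm:part-der-val} collapses to a quasi-polynomial rather than an exponential size, and correctly handling the vector-valued output, since an ABP is single-source single-sink and must therefore be produced once per partial-derivative component rather than as a single multi-output object.
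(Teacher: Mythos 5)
Your proposal is correct and follows exactly the route the paper takes: the paper presents \Cref{thm:pmc2pds} as an immediate consequence of applying \Cref{thm:ben-or-cleve} to the division-free, poly-size, $O(\log(|\mathcal{M}|)\log(nd))$-depth circuits of \Cref{thm:part-der-val}, with the size bound following from $4^{O(\log(|\mathcal{M}|)\log(nd))} = |\mathcal{M}|^{O(\log(nd))}$. Your additional bookkeeping (one single-output ABP per partial-derivative component, discarding the $g_i$ component) is a faithful elaboration of what the paper leaves implicit.
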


Next, we will give a translation from ABPs to pMCs. The following result will be useful to rewrite polynomials into a form that is easy to encode as probability distributions. For completeness, and because it is a good warm-up for the sequel, we also give a proof of the result.
\begin{lemma}[Chonev's trick~{\cite[Remark 1]{DBLP:conf/rp/Chonev19}}]\label{lem:chonev}
    Let $f \in \mathbb{Q}[\vec{x}]$ be a polynomial. Then, there are $c \in \mathbb{Z}$, $d, N \in \mathbb{N}_{>0}$, and $\vec{a}, \vec{b}
    \in \mathbb{N}^T_{> 0}$ with $\frac{|c|}{d}+\sum_{i=1}^T \frac{a_i}{b_i} \leq 1$, such that $f = N(\frac{c}{d} + 
    \sum_{i=1}^T \frac{a_i}{b_i} Q_i(\vec{x}))$ where the $Q_i$ are 
    products of at most $\deg{f}$ terms from $\bigcup_{i=1}^m \Set{x_i, 1- x_i}$. Moreover,
    \begin{itemize}
        \item $T \leq |\supp{f}|\deg{f}$ and
        \item $\log N, \log |c|, \log d, \log a_i, \log b_i$ are all $O(\reps{f})$.
    \end{itemize}
\end{lemma}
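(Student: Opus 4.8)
The plan is to begin from the sum-of-products form $f = \sum_{\vec{\alpha}\in\supp{f}} c_{\vec{\alpha}}\vec{x}^{\vec{\alpha}}$ and rewrite each monomial as a nonnegatively-weighted combination of products of terms from $\bigcup_{i=1}^m \Set{x_i, 1-x_i}$, pushing every sign and every constant into one rational number. A monomial with a \emph{positive} coefficient already has the right shape: $\vec{x}^{\vec{\alpha}}$ is a product of $\sum_i \alpha_i \le \deg{f}$ factors, each some $x_i$, so it contributes a single term $Q_i = \vec{x}^{\vec{\alpha}}$ with positive weight $c_{\vec{\alpha}}$. The constant term of $f$, if present, is simply absorbed into the eventual $\frac{c}{d}$.

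The crux is the negative coefficients, since the $a_i,b_i$ must be strictly positive. Here I would use the telescoping identity
\[
  1 - \prod_{l=1}^{k} x_{i_l} \;=\; \sum_{j=1}^{k}\left(\prod_{l=1}^{j-1} x_{i_l}\right)\!\left(1 - x_{i_j}\right),
\]
which expresses $-\vec{x}^{\vec{\alpha}}$ (for a monomial of degree $k=\sum_i\alpha_i$) as $-1$ plus a sum of $k \le \deg{f}$ products, each a product of at most $\deg{f}$ factors from $\Set{x_i, 1-x_i}$ and each with weight exactly $+1$. Scaling by $|c_{\vec{\alpha}}|$ turns a negatively-signed monomial into the constant $-|c_{\vec{\alpha}}|$ plus nonnegatively-weighted products. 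Summing over all monomials, I collect every constant contribution into a single rational $\frac{c_0}{d_0}$ with $c_0\in\mathbb{Z}$, and the remaining product terms into $\sum_{i=1}^{T} r_i Q_i$ with each $r_i > 0$ and each $Q_i$ a product of at most $\deg{f}$ factors. Since a positive monomial yields one product and a negative one yields at most $\deg{f}$, we get $T \le |\supp{f}|\deg{f}$.

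To enforce the normalization $\frac{|c|}{d}+\sum_i\frac{a_i}{b_i}\le 1$, I would factor out a large enough integer. Taking $N = \max\{1, \lceil \frac{|c_0|}{d_0}+\sum_i r_i\rceil\}$ and dividing through yields $f = N\left(\frac{c}{d}+\sum_i \frac{a_i}{b_i}Q_i\right)$ with $\frac{c}{d}=\frac{c_0}{d_0 N}$ and $\frac{a_i}{b_i}=\frac{r_i}{N}$; writing $r_i = p_i/q_i$ gives $a_i=p_i$, $b_i=q_i N$, all positive naturals. The sum of absolute weights is then exactly $\frac{1}{N}\left(\frac{|c_0|}{d_0}+\sum_i r_i\right)\le 1$, as required.

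The step demanding the most care is the bit-size bookkeeping. Each $|c_{\vec{\alpha}}|\le\coeff{f}$, so the individual weights $r_i$, and hence the $a_i$, stay bounded by $\coeff{f}$. The constant $\frac{c_0}{d_0}$ is a sum of at most $|\supp{f}|$ rationals with numerators and denominators at most $\coeff{f}$; over a common denominator this keeps $\log|c_0|,\log d_0 = O(|\supp{f}|\log\coeff{f}) = O(\reps{f})$. Because $\frac{|c_0|}{d_0}+\sum_i r_i = O(|\supp{f}|\deg{f}\coeff{f})$, we get $\log N = O(\reps{f})$, whence $\log d = \log(d_0 N)$ and $\log b_i = \log(q_i N)$ are also $O(\reps{f})$. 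The only genuine subtlety is to confirm that merging the constants over a common denominator does not push $d_0$ beyond $(\coeff{f})^{|\supp{f}|}$; once that bound is in hand, all the claimed logarithmic estimates follow.
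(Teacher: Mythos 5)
Your proposal is correct and follows essentially the same route as the paper: your telescoping identity is exactly the closed form of the paper's iterative ``replace one $x_i$ by $1-x_i$ to flip the sign, then cancel the lower-degree residual'' rewriting (compare your sum with the paper's example $-2x_1x_2x_3 = 2(1-x_1)x_2x_3 + 2(1-x_2)x_3 + 2(1-x_3) - 2$), and it yields the same bounds on $T$ and the degrees of the $Q_i$. Your bit-size bookkeeping differs only cosmetically (you keep the weights $r_i$ as separate fractions and take a common denominator only for the constants, then normalize by a ceiling, whereas the paper merges all denominators and sets $N$ to the sum of numerators), and both yield the claimed $O(\reps{f})$ estimates.
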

\begin{proof}
    Starting from a sum-of-products representation of $f$, we choose some monomial $t_{\vec{\alpha}} \vec{x}^{\vec{\alpha}}$ such that $t_{\vec{\alpha}}$ is negative and $\vec{\alpha} \neq \vec{0}$, i.e. it is not the constant term. Now, we choose one of the variables $x_i$ and replace it with $1 - x_i$ in the product so that we can flip the sign of $t_{\vec{\alpha}}$. To obtain a sum of terms equivalent to the original monomial, we need to cancel out the new monomial we have introduced. The latter has degree strictly smaller than the original monomial so we can repeat our approach until we get a constant (whose sign does not matter in the claim). The original example by Chonev is quite instructive: If we start with $-2x_1x_2x_3$, we can rewrite it as:
    \(
    2(1-x_1)x_2x_3 + 2(1-x_2)x_3 + 2(1-x_3) - 2.
    \)

    Observe that, in the rewriting process described above, every monomial is
    transformed into a sum of at most $\deg{f}$ terms. The first bound in the
    claim follows from this observation. For the second bound, we note that
    all coefficients $\frac{a_i}{b_i}$ and the constant $\frac{c}{d}$ are sums of rational constants from $f$
    whose bitsize is at most $\log(\coeff{f})$. Now we consider, in turn, each of the operations which make the bitsize of the result larger.
    \begin{description}
        \item[To obtain a common denominator] for the sums, in the worst case, we need to multiply all
        denominators. This results in adding $\log(\coeff{f})$ to the bitsize (of both the numerator and denominator) at most $T$ times.
        \item[The sums to obtain
        each numerator] $a_i$ may incur in a further blow-up of the bitsize by adding
        $T$ to it. So far, the bitsize of $|c|,d,a_i,b_i$ is at most:
        \begin{align*}
            & (|\supp{f}|\deg{f} + 1)\log(\coeff{f}) + |\supp{f}|\deg{f} \\
            {}\leq{} & (|\supp{f}|\deg{f} + 1)(\log(\coeff{f})+1).
        \end{align*}
        \item[To get a sum of at most one] and to factor out $N$, we can set $N$ to be the sum of all numerators $a_i$. Then, all denominators get multiplied by $N$ and we can factor it out as required. Hence, the bitsize of $N$ is at most:
        \begin{align*}
            & (|\supp{f}|\deg{f} + 1)(\log(\coeff{f}) + 1) + |\supp{f}|\deg{f} \\
            {}\leq{} & (|\supp{f}|\deg{f} + 1)(\log(\coeff{f}) + 2),
        \end{align*}
        while the final bitsize of the $b_i$ is at most double that of $N$ due to their multiplication by $N$.
    \end{description} 
    The claim thus follows from the definition of $\reps{f}$.
\end{proof}

Now, using Chonev's trick, we can inductively construct an acyclic pMC from any ABP such that one of the value functions of the former and the polynomial computed by the latter have a linear relation.
\begin{lemma}\label{lem:ind-constr}
    Let $\mathcal{A} = (V,E,\ell)$ be an ABP over $\mathbb{Q}[\vec{x}]$ of width $w$ and write $s = \max_{e \in E} |\supp{\ell(e)}|$. Then, we can compute $N \in \mathbb{N}_{>0}, \beta \in \mathbb{Q}$ and a simple pMC $\mathcal{M} = (S,P)$ with value functions $\vec{g}$ such that $\sem{\mathcal{A}} = \beta + Ng_1$, $|S|$ is $O(|V|sw)$ and $\beta$, $N$, and all $p_{ij}$, have polynomial bitsize in $|V|,w,\max_{e \in E} \reps{\ell(e)}$.
\end{lemma}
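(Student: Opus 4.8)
The plan is to process the ABP in reverse topological order, turning its layered DAG directly into the graph of $\mathcal{M}$ with the ABP sink playing the role of the target. For each vertex $v$ let $h_v = \sum_{\pi : v \rightsquigarrow \text{sink}} \prod_{e \in \pi} \ell(e)$ be the polynomial computed at $v$ towards the sink, so that $h_{\text{sink}} = 1$ and $h_{\text{source}} = \sem{\mathcal{A}}$, and note the backward recurrence $h_v = \sum_{(v,w) \in E} \ell(v,w)\, h_w$ mirrors the defining recurrence $g_i = \sum_j p_{ij} g_j$ of value functions. Because edge labels, and hence the $h_v$, may be negative, I would not represent $h_v$ by a single value function. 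Instead I maintain, for each $v$, a pair of states $q_v^+, q_v^-$ and a single $N_v \in \mathbb{N}_{>0}$ with $h_v = N_v(g_{q_v^+} - g_{q_v^-})$, where both $g_{q_v^\pm} \in [0,1]$ are genuine value functions. The base case is the sink: take $q_{\text{sink}}^+$ to be the target, $q_{\text{sink}}^-$ a fresh state feeding only the sink, and $N_{\text{sink}} = 1$.

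For the inductive step at a vertex $v$ in layer $i$ with successors $w$ in layer $i+1$, I apply Chonev's trick (\Cref{lem:chonev}) to each edge label, writing $\ell(v,w) = N_{vw}\bigl(\tfrac{c}{d} + \sum_t \tfrac{a_t}{b_t} Q_t\bigr)$ where every $Q_t \in \Set{x_r, 1-x_r}$ (since $\deg{\ell(v,w)} \le 1$) and the number of terms is at most $s$. Substituting $h_w = N_w(g_{q_w^+} - g_{q_w^-})$ into the recurrence, each summand becomes a non-negative coefficient, or a signed constant, times one of $g_{q_w^{\pm}}$; routing each product to $q_v^+$ or $q_v^-$ according to its sign turns the expression into two \emph{non-negative} combinations of the $g_{q_w^{\pm}}$. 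I realize each combination by transitions out of the corresponding $q_v^{\pm}$: a term $\tfrac{a_t}{b_t} Q_t$ becomes a transition with that probability to the appropriate successor, a non-negative constant becomes a constant-probability transition, and the leftover probability is sent to the sink. Two points need bookkeeping: the coefficients must sum to at most one, which I enforce by a uniform per-layer scaling (there are at most $w$ successors, each contributing Chonev-coefficient sum $\le 1$); and the constants $N_{vw}N_w$ must be equalized across all edges leaving layer $i$, which I achieve by raising each to a common per-layer value. This keeps a single $N_v = N^{(i)}$ valid for every vertex of the layer and makes the multiplicative factors telescope along every source-to-sink path into one global $N$.

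At the source I collapse the pair into a single value function. A fresh state that moves with probability $\tfrac12$ to $q_{\text{source}}^+$ and with probability $\tfrac12$ to a state computing $1 - g_{q_{\text{source}}^-}$ (obtained by swapping the roles of target and sink in the $q^-$-subconstruction) has value $g_1 = \tfrac12 g_{q_{\text{source}}^+} + \tfrac12\bigl(1 - g_{q_{\text{source}}^-}\bigr) = \tfrac{1}{2N_{\text{source}}}\sem{\mathcal{A}} + \tfrac12$. Hence $\sem{\mathcal{A}} = \beta + N g_1$ with the constants $N = 2N_{\text{source}} \in \mathbb{N}_{>0}$ and $\beta = -N_{\text{source}} \in \mathbb{Q}$; relabel this state as state $1$. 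It is exactly the pair-of-states invariant that lets a sign-agnostic tool like Chonev's trick be applied edge-by-edge, and that confines the only nonzero (constant) $\beta$ to this final step.

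For the quantitative claims: each edge contributes a gadget of $O(s)$ states (one per Chonev term), there are $O(|V|w)$ edges, and pairing only doubles the count, so $|S| = O(|V|sw)$. Every transition probability is a rational constant or a scaled term $\tfrac{a_t}{b_t}Q_t$, of bitsize $O(\reps{\ell(e)})$ by \Cref{lem:chonev} plus an $O(\log w)$ scaling overhead, and $N$ (hence also $|\beta|$) is a product of at most $d \le |V|$ per-layer factors, so all bitsizes are polynomial in $|V|, w$, and $\max_{e} \reps{\ell(e)}$. Finally, each transition polynomial is a non-negative combination of terms $x_r, 1-x_r$ with a rational constant, scaled so that on $(0,1)^m$ every nonzero transition and the complementary sink transition lie strictly in $(0,1)$ and rows sum to $1$, while graph preservation fails outside $[0,1]^m$; thus $\mathcal{M}$ is simple. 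I expect the main obstacle to be the simultaneous control of signs and normalization: proving that the per-layer equalization of the constants can always be done while keeping all resulting probabilities valid and the global factor $N$ of controlled bitsize.
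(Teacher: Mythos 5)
Your construction is correct and follows the same backbone as the paper's proof: a backward induction over the layers of the ABP, two pMC states per ABP vertex whose value functions determine $\sem{v}$ up to an affine/linear correction, Chonev's trick (\Cref{lem:chonev}) to turn signed linear terms into probability-shaped terms, chains of length at most $2$ per term, and a per-layer equalization of denominators and of the constants $N$ so that the scaling factors compose into a single $N$ of polynomially bounded bitsize. The one genuine difference is the invariant: you maintain $\sem{v} = N_v\bigl(g_{q_v^+} - g_{q_v^-}\bigr)$ and handle signs by routing each expanded term to the $+$ or $-$ state, applying Chonev only to the (linear) edge labels; the paper instead maintains $\sem{v}/N_v - \beta_v = g_{\overline{v}} = 1 - g_{\underline{v}}$ and applies Chonev's trick to the entire substituted expression, treating the $g_{\overline{v}}$ as extra variables (hence quadratic terms), so that negative signs are absorbed by $1-x$ and $1-g_{\overline{v}}$ substitutions, the latter realized by the complement state $\underline{v}$. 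Your variant localizes the sign bookkeeping and keeps Chonev's trick on degree-$1$ inputs, at the cost of the extra source gadget (the $\tfrac12$/$\tfrac12$ split onto $q^+$ and a target--sink-swapped copy of the $q^-$ subchain) to collapse the difference into a single value function; the paper's variant gets the affine relation $\sem{\mathcal{A}} = \beta + Ng_1$ directly at the source. Both yield the same $O(|V|sw)$ state bound and the same per-layer $O(swM)$ bitsize growth, and the normalization step you flag as the main obstacle is resolved exactly as you anticipate (summing numerators per layer to fix a common $N$, which telescopes over at most $|V|$ layers).
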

Before we prove the lemma, we observe that together with \Cref{thm:pmc2pds} it implies that we can construct quasi-polynomial partial-derivative pMCs.
\begin{theorem}\label{thm:quasi}
    Let $\mathcal{M} = (S,P)$ be a pMC with value functions $\vec{g}$ and $d = \max_{i,j} \deg{p_{ij}}$. Then, for any $i \in S$ and $0 \leq k \leq m$, we can compute $N \in \mathbb{N}_{>0},\beta \in \mathbb{Q}$ and a simple pMC $\mathcal{M}'$ of size $|\mathcal{M}|^{O(\log(|S|d))}$ with value functions $\vec{g'}$ s.t. $\frac{\partial g_i}{\partial x_k} = \beta + N g'_1$ where $\beta$ and $N$ have $|\mathcal{M}|^{O(\log(|S|d))}$ bitsize representations.
\end{theorem}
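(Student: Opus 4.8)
The plan is to compose the two preceding results, the only genuinely new ingredient being a small gadget that turns a ratio of polynomials into a value function. \Cref{thm:pmc2pds} already supplies, for the fixed $i$ and $k$, width-$4$ ABPs of size $|\mathcal{M}|^{O(\log(nd))}$ whose ratio computes $\frac{\partial g_i}{\partial x_k}$; selecting the output gates associated with $x_k$ gives two polynomials $P$ and $Q$ with $\frac{\partial g_i}{\partial x_k} = P/Q$, each presented by a width-$4$ ABP. The remaining task is to realize this single rational function as an affine image $\beta + N g'_1$ of one value function of a simple pMC. \Cref{lem:ind-constr} handles the polynomial case: from an ABP it returns a simple (acyclic) pMC whose first value function is affinely related to the polynomial the ABP computes. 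What is missing is therefore the division, which I would provide with a geometric loop placed on top of two applications of \Cref{lem:ind-constr}.

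Concretely, I would apply \Cref{lem:ind-constr} once to the ABP for $P$ and once to the ABP for $Q$, obtaining simple pMCs whose first value functions are affinely related to $P$ and to $Q$. After folding the resulting affine constants into the final $\beta$ and $N$, I may assume I have at hand two designated ``hit'' probabilities $A$ and $B$, realized as reachability probabilities of these sub-pMCs, with $0 \le A \le B \le 1$ on all of $(0,1)^m$ and $A/B = \frac{1}{N}\bigl(\frac{\partial g_i}{\partial x_k} - \beta\bigr)$. I then add a fresh start state $s$ whose three bundled moves are routed through the sub-pMCs: reach the global target with probability $A$, reach the global sink with probability $B-A$, and return to $s$ with probability $1-B$. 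The one-state fixed point $g'_s = A + (1-B)g'_s$ yields $g'_s = A/B$, so that $\frac{\partial g_i}{\partial x_k} = \beta + N g'_s$; renaming $s$ to state $1$ completes the required identity.

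For the bounds I would push the parameters of the ABPs through \Cref{lem:ind-constr}. Every edge label is a linear term, so $s = \max_e |\supp{\ell(e)}| \le m+1$ and each $\reps{\ell(e)}$ is polynomial; with width $w = 4$ and ABP size $|V| = |\mathcal{M}|^{O(\log(nd))}$, each invocation produces a sub-pMC with $O(|V|sw)$ states and transition probabilities of polynomial bitsize in $|V|, w, \max_e \reps{\ell(e)}$. Hence both sub-pMCs, and therefore $\mathcal{M}'$ after adding the constant-size loop gadget, have size $|\mathcal{M}|^{O(\log(|S|d))}$, and the renormalizing constants that make up $\beta$ and $N$ have bitsize dominated by those already produced.

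The step I expect to be the crux is not the counting but guaranteeing that the final object is a \emph{simple} pMC while the loop gadget remains sound. The raw derivative $\frac{\partial g_i}{\partial x_k}$ need not lie in $[0,1]$ and can even be negative, so $\beta$ and $N$ must be chosen to slide $A/B$ into $[0,1]$; simultaneously the bundled probabilities $A$, $B-A$ and $1-B$ must be strictly positive on $(0,1)^m$ and sum to one there, which is exactly the condition that the graph-preserving valuations of $\mathcal{M}'$ are precisely $(0,1)^m$. Here I would rely on \Cref{lem:chonev}, already invoked inside \Cref{lem:ind-constr}, to rewrite the possibly-negative polynomials $P$ and $Q$ as convex combinations of products of terms from $\Set{x_i, 1-x_i}$, keeping every bundled transition a legitimate sub-probability; reconciling this normalization with the \emph{exact} affine identity $\frac{\partial g_i}{\partial x_k} = \beta + N g'_1$, and in particular checking that the affine window $[\beta, \beta+N]$ contains the range of the derivative on the open cube (where the denominator $Q$ stays positive), is the delicate part.
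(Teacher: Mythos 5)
Your high-level route is exactly the paper's: \Cref{thm:quasi} is presented there as an immediate consequence of composing \Cref{thm:pmc2pds} with \Cref{lem:ind-constr}, with no further argument given, and you are right to observe that the one step this composition does not literally supply is the division $\sem{\mathcal{N}_i}/\sem{\mathcal{D}_i}$. Your geometric-loop gadget $g'_s = A + (1-B)g'_s$, hence $g'_s = A/B$, is a reasonable candidate for that missing step, and your size and bitsize bookkeeping through \Cref{lem:ind-constr} (width $4$, $s \le m+1$, $|V| = |\mathcal{M}|^{O(\log(nd))}$) is fine.

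There are, however, two genuine gaps. First, the three-way branch from $s$ with probabilities $A$, $B-A$ and $1-B$ cannot be obtained by merely ``routing through'' two independently built sub-pMCs: \Cref{lem:ind-constr} gives you, in two \emph{separate} chains, states whose reachability probabilities are affine images of $P$ and of $Q$, whereas you need one acyclic gadget whose three absorbing outcomes realize the correlated quantities $A \propto P-\beta Q$, $B-A \propto NQ-(P-\beta Q)$ and $1-B$; this amounts to redoing the inductive Chonev construction jointly on the pair, and in particular presupposes $A \le B$ pointwise. Second, and more seriously, the condition you yourself flag as the delicate part --- that the range of $\partial g_i/\partial x_k$ on the open cube (where $Q>0$) fits inside $[\beta,\beta+N]$ --- can genuinely fail, because partial derivatives of value functions of simple pMCs need not be bounded on $(0,1)^m$. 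Concretely, the two-parameter simple pMC with $s \xrightarrow{p} \top$, $s \xrightarrow{1-p} s'$, $s' \xrightarrow{q} \bot$, $s' \xrightarrow{1-q} s$ has $g_s = p/(p+q-pq)$ and $\partial g_s/\partial p = q/(p+q-pq)^2$, which tends to $+\infty$ along $p=q\to 0$; then $B-A$ changes sign inside $(0,1)^2$ for every choice of $\beta \in \mathbb{Q}$ and $N \in \mathbb{N}_{>0}$, so your gadget is not graph-preserving on all of $(0,1)^2$ and $\mathcal{M}'$ is not simple. Note that this is not an artifact of your particular gadget: any identity $\partial g_i/\partial x_k = \beta + Ng'_1$ with $g'_1$ a reachability probability of a simple pMC forces the left-hand side to be bounded on the cube, so without an additional boundedness (or acyclicity) hypothesis the construction cannot be completed as written.
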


\begin{proof}[of \Cref{lem:ind-constr}]
  Let $\mathcal{A}$ have $L + 1$ layers $V_0,\dots,V_L \subseteq V$. Then, for
  all $u \in V_{L-1}$ we have that $\sem{u} = \ell(u,t)$, where $t \in V_L$ is
  the unique sink. Similarly, for all $0 \leq j < L -1$ and all vertices $u \in
  V_j$:
  \begin{equation}\label{eqn:chonev-rec}
    \sem{u} = \sum_{v \in V_{j+1}} \ell(u, v) \sem{v}.
  \end{equation}
  Based on the equation above, we will give an inductive construction of a pMC
  $\mathcal{M}$ based on Chonev's trick now. 

  Our induction is on the layer $j$ and we will argue that for all $u \in
  V_{j}$ we have states $\overline{u},\underline{u}$ in the pMC such that
  $\sem{u}/N_u - \beta_u = g_{\overline{u}} = 1 - g_{\underline{u}}$.
  First, we add states $\bot$ and $\top$ to $\mathcal{M}$ with the intention
  of $\top$ being the target state and $\bot$ a state with only itself as
  successor. For $j = L - 1$ and each vertex $u \in V_j$ we add to
  $\mathcal{M}$ states $\overline{u},\underline{u}$.  Using Chonev's trick, we
  then rewrite $\ell(u,t)$ as $N_u(\beta_u + \sum_{k=1}^T \alpha_k
  Q_k(\vec{x}))$. Note that since $\ell(u,t)$ is a linear function, $T \leq 2s$
  and the $Q_k(\vec{x})$ are also linear. For each $Q_k(\vec{x})$ we add a
  state to $\mathcal{M}$ and a transition from it to $\top$ with probability
  $Q_k(\vec{x})$ and one to $\bot$ with probability $1 - Q_k(\vec{x})$. From
  $\overline{u}$ we add transitions to the states $Q_k$ states with
  corresponding probability $\alpha_k$ and one to $\bot$ with probability $1 -
  \sum_{k=1}^T \alpha_k$. To conclude, we add another state for each $Q_k$ and
  transitions from them to $\bot$, instead of $\top$, with probability $Q_k$
  and transitions to $\top$ with $1-Q_k$, as well transitions from
  $\underline{u}$ to these new copies of $Q_k$ with the $\alpha_k$ as
  probabilities and to $\top$, not $\bot$, with $1 - \sum_{k=1}^T \alpha_k$.
  It is easy to check all the desired properties hold so the claim holds
  for some $j$.

  For the inductive step, consider $0 \leq j < L - 1$ and
  a vertex $u \in V_j$. We again add states $\overline{u},\underline{u}$ to $\mathcal{M}$ and
  consider \Cref{eqn:chonev-rec}. From the inductive hypothesis, 
  \begin{equation}\label{eqn:expand}
    \sem{u} = \sum_{v \in V_{j+1}} \ell(u, v) N_v\left(\beta_{v} +
    g_{\overline{v}}\right) = N_v\sum_{v \in V_{j+1}} \beta_{v} \ell(u,v) + 
    \ell(u,v) g_{\overline{v}}.
  \end{equation}
  Interpreting the $g_{\overline{v}}$ as variables for a moment, we can again
  use Chonev's trick to rewrite the above as $N_u(\beta_u + \sum_{k=1}^T
  \alpha_k Q_k(\vec{x}))$. This time we started from a sum of (at most $sw$)
  quadratic terms, so $T \leq 3sw$ and the $Q_k(\vec{x})$ are at most
  quadratic. For each $Q_i$ we construct a chain of length at most $2$ leading
  to $\top$, $\overline{v}$, or $\underline{v}$ depending on whether it has no
  term with $g_{\overline{v}}$ as a factor, it has $g_{\overline{v}}$ as a
  factor, or it has $(1 - g_{\overline{v}})$ as a factor. From $\overline{u}$
  we add transitions to the start of the new chains with corresponding
  probabilities $\alpha_k$ and to $\bot$ with the remaining probability.
  Similarly, we construct chains of length at most
  $2$ leading to $\bot$, $\underline{v}$, or $\overline{v}$ with the same
  conditions as before, in the same order. Then, from $\underline{v}$ we add
  transitions to the new chains and to $\top$ with the remaining probability.
  Once more the desired properties hold and we thus conclude the description
  of how to construct the simple pMC.

  The bound on the number of states from the pMC is immediate from the construction and the
  bounds on the number of chains constructed per state per layer (all linear
  in $sw$) and their size being constant. For the bounds on the bitsize of the
  numbers, we revisit the cases considered for the proof of~\Cref{lem:chonev} with special attention to the substitution applied to get \Cref{eqn:expand}. Write $M$ for the value $\max_{e \in E} \log(\coeff{\ell(e)})$. 
  \begin{description}
        \item[To obtain a common denominator] for the sums, we may need to multiply all
        denominators. However, we can make sure that all $\beta_v$ (from the same layer) have the same common denominator. This means that it suffices to compute a common denominator for the coefficients of the $\ell(u,v)$ and multiply it by that of the $\beta_v$.        
        Hence, in each layer, we add $M$ to the bitsize (of both the numerator and denominator) at most $T \leq 3sw$ times.
        \item[The sums to obtain
        each numerator] may incur in a further blow-up of the bitsize by adding
        $T \leq 3sw$ to it. This is on top of the $M$ additional bits relative to the representation of the numerator of the $\beta_v$.
        \item[To get a sum of at most one] and to factor out $N_u$ we again need to be careful. Note that we can make sure all the $N_v$ (from the same layer) are equal. Now, since $\beta_v \leq 1$, it suffices to add the absolute values of all numerators of coefficients and constants from the $\ell(u,v)$ (for all $v$ in the same layer), call that $A$, and to set $N_u = A N_v$. Then, all denominators get multiplied by $A$ too and we get an increase in bitsize, for numerators and denominators and for $N_u$ relative to $N_v$, of at most $3swM$.
    \end{description}
    From the above analysis we get that every layer results in adding $O(swM)$ to the bitsize of the integers required to write down the polynomials we manipulate. Since the number of layers is at most $|V|$, 
    all the $\reps{p_{ij}}$ are bounded by $O(|V|w\max_{e \in E} \reps{\ell(e)})$ as required.
\end{proof}

Unfortunately, it seems difficult to improve \Cref{thm:quasi} and get a polynomial-time reduction using techniques from arithmetic circuits and branching programs as we have. Indeed, the question of whether the former can be translated into polynomial-sized branching programs (a.k.a. \emph{skewed circuits}) seems to be open (cf.~\cite[Sec. 3, p. 10]{DBLP:journals/corr/Mahajan13}). It is also not clear how to adapt our inductive translation from ABPs to get pMCs from the weighted automata used in~\cite[Section 3.2]{DBLP:conf/vmcai/HeckSJMK22} to represent partial derivatives of value functions. Here, the main complication is that the automata may have cycles.

\section{NWR equivalence classes}
In the rest of the paper, we adopt a practical approach in reducing the state space of pMCs by collapsing NWR equivalent states. To keep the algorithm efficient and to avoid the difficulties of ETR-hardness, we focuss on a subclass of pMCs, which we define below. It is important to note that even in a general pMC, if we ignore the parameters to obtain a ``trivial'' pMC, any equivalence in the new pMC is also one in the original pMC.

A \textit{trivially parametric} Markov chain is a pMC where the polynomial on each transition $t$ is the linear function $x_t$, so $x_t$ appears uniquely in the label of $t$. It is known that equivalence classes of trivially parametric Markov chains can be computed in polynomial time \cite{DBLP:conf/atva/EngelenPR23}. This is due to the fact that every equivalence class has a unique ``exit'', that is, a unique state that has transitions leaving the equivalence class. This exit is the closest state to the target among all states in the equivalence class. It follows that equivalence classes can be found by doing a reverse breadth-first search (BFS) from the target state to find these exits and then looking for states that almost-surely reach these exits (see \Cref{algo:eq}).

One can easily verify that the algorithm runs in $O(n^2)$ time, where $n$ is the number of states of the pMC. The extremal states can certainly be computed and collapsed in $O(n^2)$~\cite{DBLP:books/daglib/0020348}. Furthermore, each state $u\in S$ is visited at most once during the reverse BFS from the final state and computing the set of all states which have a path to the final or fail states after removing $u$ takes linear time. 

\begin{algorithm}[hbt]
\caption{Compute and collapse equivalence classes}\label{algo:eq}
\Input{A pMC $\mathcal{M}=(S,P)$, final state $\good$ and fail state $\bad$}
\Output{The pMC with all equivalence classes collapsed}
    Contract extremal-value states\;
    $\text{TODO} \gets S$\;
    $\text{EC} \gets \emptyset$\;
    Order the states in reverse BFS order, starting from the final state\;
    \ForEach{$u \in S$}{
        \If{$u \not \in \text{TODO}$} {
            \Continue\;
        }
        Determine set $U$ of states that have a path to $\good$ or a path to $\bad$ without going through the state $u$\;
        $\tilde{u} \gets \Set{u} \cup (V \setminus U)$\;
        $\text{TODO} \gets \text{TODO} \setminus \tilde{u}$\;
        $\text{EC} \gets \text{EC} \cup \Set{u}$\;
    }
    \ForEach{$u \in \text{EC}$}{
        Collapse all states in $\tilde{u}$ into $u$\;
    }
\end{algorithm}

The following is immediate from the definitions of NWR and monotonicity.
\begin{remark}
    Collapsing NWR equivalence classes preserves monotonicity.
\end{remark}

We implemented our algorithm and tested its performance against a number of benchmarks. These benchmarks can be divided into three categories: those from Spel et al. \cite{DBLP:conf/tacas/SpelJK21} which focus on the monotonicity analysis module of Storm; those from Heck et al. \cite{DBLP:conf/vmcai/HeckSJMK22}; and a number of benchmarks we constructed ourselves based on the properties of our algorithm. Our code and experimental data can be found on Zenodo \cite{paper_code}.

Our experiments aim to answer the following three questions:
\begin{itemize}
    \item[Q1.] How much does \Cref{algo:eq} reduce the size of (parametric) Markov chains?
    \item[Q2.] Does using the algorithm as a pre-processing step cause the monotonicity analysis algorithm from \cite{DBLP:conf/tacas/SpelJK21} to run more efficiently?
    \item[Q3.] Does our pre-processing algorithm make parameter lifting \cite{DBLP:conf/vmcai/HeckSJMK22} more efficient?
\end{itemize}

\subsection{Experimental set-up}

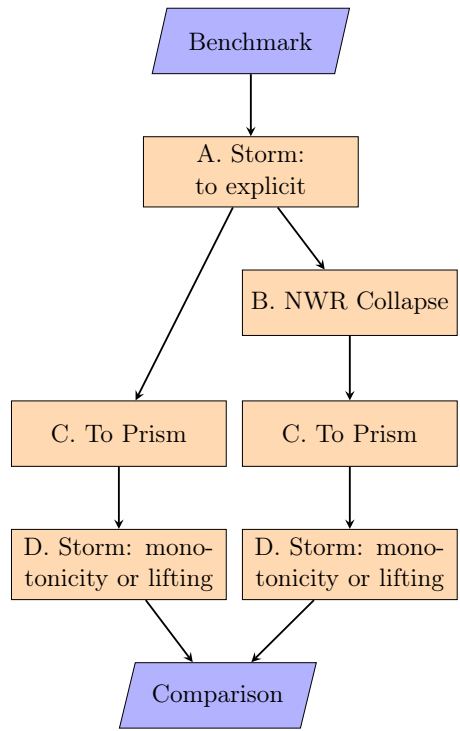
\begin{figure}[h]
    \centering
    \resizebox{0.4\textwidth}{!}{
    \begin{tikzpicture}[node distance=2cm]

    \node (benchmark) [io] {Benchmark};
    
    \node (to_explicit) [process, below of=benchmark] {A. Storm: to explicit};
    
    \node (nwr_collapse) [process, below of = to_explicit, xshift=1.5cm] {B. NWR Collapse};
    
    \node (to_prism_col) [process, below of=nwr_collapse] {C. To Prism};
    \node (to_prism_ref) [process, left of=to_prism_col, xshift=-1.5cm] {C. To Prism};

    \node (storm_col) [process, below of=to_prism_col] {D. Storm: monotonicity or lifting};
    \node (storm_ref) [process, below of=to_prism_ref] {D. Storm: monotonicity or lifting};
    
    \node (comparison) [io, below of=storm_ref, xshift=1.5cm] {Comparison};
    
    \draw [arrow] (benchmark) -- (to_explicit);
    \draw [arrow] (to_explicit) -- (nwr_collapse);
    \draw [arrow] (nwr_collapse) -- (to_prism_col);
    \draw [arrow] (to_explicit) -- (to_prism_ref);
    \draw [arrow] (to_prism_col) -- (storm_col);
    \draw [arrow] (to_prism_ref) -- (storm_ref);
    \draw [arrow] (storm_col) -- (comparison);
    \draw [arrow] (storm_ref) -- (comparison);
    
    \end{tikzpicture}
    }
    \caption{Overview of the experimental setup.}
    \label{fig:experimental_flow_chart}
\end{figure}

\begin{figure}[p]
    \centering
\subfloat[Variant A. \label{fig:reach0_ptime_var_a}] {
    \begin{tikzpicture}[shorten >=1pt,auto,node distance=.9 cm, scale = 0.5, transform shape,initial text={}]
    
    \node[state, initial](start){$s$};

    \node[state](np1i1l1)[above right=of start]{$n+1$};
    \node[state](1i1l1)[above=of np1i1l1]{$1$};
    \node[](d1i1l1)[right=.2cm of 1i1l1]{$\dots$};
    \node[](d2i1l1)[right=.2cm of np1i1l1]{$\dots$};
    \node (i1l1) [draw=red, fit= (np1i1l1) (d1i1l1) (1i1l1), inner sep=0.1cm, fill=red!20, fill opacity=0.2]{};
    \node[](d3i1l1)[below right=of d1i1l1]{$\dots$};

    \node[state](1i1l2)[below right=of start]{$1$};
    \node[state](np1i1l2)[below=of 1i1l2]{$n+1$};
    \node[](d1i1l2)[right=.2cm of 1i1l2]{$\dots$};
    \node[](d2i1l2)[right=.2cm of np1i1l2]{$\dots$};
    \node (i1l2) [draw=red, fit= (np1i1l2) (d1i1l2) (1i1l2), inner sep=0.1cm, fill=red!20, fill opacity=0.2]{};
    \node[](d3i1l2)[below right=of d1i1l2]{$\dots$};

    \node[state](1ikl1)[right=2cm of d1i1l1]{$1$};
    \node[state](2ikl1)[right=of 1ikl1]{$2$};
    \node[](d1ikl1)[right=.2cm of 2ikl1]{$\dots$};
    \node[state](nikl1)[right=.3cm of d1ikl1]{$n$};
    \node[state](np1ikl1)[below=of 1ikl1]{$n+1$};
    \node[state](np2ikl1)[below=of 2ikl1]{$n+2$};
    \node[](d2ikl1)[right=.2cm of np2ikl1]{$\dots$};
    \node[state](2nikl1)[right=.2cm of d2ikl1]{$2n$};
    \node (ikl1) [draw=red, fit= (np1ikl1) (2nikl1) (nikl1), inner sep=0.1cm, fill=red!20, fill opacity=0.2]{};

    \node[state](1ikl2)[right=2cm of d1i1l2]{$1$};
    \node[state](2ikl2)[right=of 1ikl2]{$2$};
    \node[](d1ikl2)[right=.2cm of 2ikl2]{$\dots$};
    \node[state](nikl2)[right=.3cm of d1ikl2]{$n$};
    \node[state](np1ikl2)[below=of 1ikl2]{$n+1$};
    \node[state](np2ikl2)[below=of 2ikl2]{$n+2$};
    \node[](d2ikl2)[right=.2cm of np2ikl2]{$\dots$};
    \node[state](2nikl2)[right=.2cm of d2ikl2]{$2n$};
    \node (ikl1) [draw=red, fit= (np1ikl2) (2nikl2) (nikl2), inner sep=0.1cm, fill=red!20, fill opacity=0.2]{};

    \node[state](1ikp1l1)[right=of nikl1]{$1$};
    \node[state](np1ikp1l1)[below=of 1ikp1l1]{$n+1$};
    \node[](d1ikp1l1)[right=.2cm of 1ikp1l1]{$\dots$};
    \node[](d2ikp1l1)[right=.2cm of np1ikp1l1]{$\dots$};
    \node (ikp1l1) [draw=red, fit= (np1ikp1l1) (d1ikp1l1) (1ikp1l1), inner sep=0.1cm, fill=red!20, fill opacity=0.2]{};
    \node[](d3ikp1l1)[below right=of d1ikp1l1]{$\dots$};

    \node[state](np1ikp1l2)[right=of 2nikl2]{$n+1$};
    \node[state](1ikp1l2)[above=of np1ikp1l2]{$1$};
    \node[](d1ikp1l2)[right=.2cm of 1ikp1l2]{$\dots$};
    \node[](d2ikp1l2)[right=.2cm of np1ikp1l2]{$\dots$};
    \node (ikp1l2) [draw=red, fit= (np1ikp1l2) (d1ikp1l2) (1ikp1l2), inner sep=0.1cm, fill=red!20, fill opacity=0.2]{};
    \node[](d3ikp1l2)[below right=of d1ikp1l2]{$\dots$};

    \node[](d1inl1)[above right=of d3ikp1l1]{$\dots$};
    \node[state](ninl1)[right=.2cm of d1inl1]{$n$};
    \node[state](2ninl1)[below=of ninl1]{$2n$};
    \node[](d2inl1)[left=.2cm of 2ninl1]{$\dots$};
    \node (inl1) [draw=red, fit= (d1inl1) (2ninl1) (ninl1), inner sep=0.1cm, fill=red!20, fill opacity=0.2]{};
    
    \node[](d1inl2)[above right=of d3ikp1l2]{$\dots$};
    \node[state](ninl2)[right=.2cm of d1inl2]{$n$};
    \node[state](2ninl2)[below=of ninl2]{$2n$};
    \node[](d2inl2)[left=.2cm of 2ninl2]{$\dots$};
    \node (inl2) [draw=red, fit= (d1inl2) (2ninl2) (ninl2), inner sep=0.1cm, fill=red!20, fill opacity=0.2]{};

    \node[state,accepting](fin)[right=.6cm of 2ninl1]{$1$};
    \node[state,accepting](fail)[right=.6cm of ninl2]{$0$};
    
    \path[->] 

    (start) edge [bend left=10,above left] node [align=center] {$q$} (1i1l1)
            edge [below left] node [align=center] {$1-q$} (1i1l2)
    
    (1ikl1) edge [above] node [align=center] {$p$} (2ikl1)
            edge [left] node [pos=.3] {$1-p$} (np2ikl1)
    (np1ikl1) edge [above] node [align=center] {$1-p$} (np2ikl1)
            edge [left] node [pos=.3] {$p$} (2ikl1)
    (nikl1) edge [left] node [align=center] {$1$} (2nikl1)
    (2nikl1) edge node {$q$} (1ikp1l1)
            edge node [pos=.7] {$1-q$} (1ikp1l2)

    (1ikl2) edge [above] node [align=center] {$p$} (2ikl2)
            edge [left] node [pos=.3] {$1-p$} (np2ikl2)
    (np1ikl2) edge [above] node [align=center] {$1-p$} (np2ikl2)
            edge [left] node [pos=.3] {$p$} (2ikl2)
    (2nikl2) edge [left] node [align=center] {$1$} (nikl2)

    (nikl2) edge node {$q$} (1ikp1l2)
            edge [bend left=3] node [pos=.4] {$1-q$} (1ikp1l1)

    (ninl1) edge [left] node [align=center] {$1$} (2ninl1)
    (2ninl1) edge node {$q$} (fin)
            edge node {$1-q$} (fail)

    (2ninl2) edge [left] node [align=center] {$1$} (ninl2)
    (ninl2) edge [] node [pos=.7] {$1-q$} (fin)
            edge [below] node {$q$} (fail)
    ;
    \end{tikzpicture}
}\qquad
\subfloat[Variant B. \label{fig:reach0_ptime_var_b}] {
    \begin{tikzpicture}[shorten >=1pt,auto,node distance=.9 cm, scale = 0.5, transform shape,initial text={}]
    
    \node[state, initial](start){$s$};

    \node[state](np1i1l1)[above right=of start]{$n+1$};
    \node[state](1i1l1)[above=of np1i1l1]{$1$};
    \node[](d1i1l1)[right=.2cm of 1i1l1]{$\dots$};
    \node[](d2i1l1)[right=.2cm of np1i1l1]{$\dots$};
    \node (i1l1) [draw=red, fit= (np1i1l1) (d1i1l1) (1i1l1), inner sep=0.1cm, fill=red!20, fill opacity=0.2]{};
    \node[](d3i1l1)[below right=of d1i1l1]{$\dots$};

    \node[state](1i1l2)[below right=of start]{$1$};
    \node[state](np1i1l2)[below=of 1i1l2]{$n+1$};
    \node[](d1i1l2)[right=.2cm of 1i1l2]{$\dots$};
    \node[](d2i1l2)[right=.2cm of np1i1l2]{$\dots$};
    \node (i1l2) [draw=red, fit= (np1i1l2) (d1i1l2) (1i1l2), inner sep=0.1cm, fill=red!20, fill opacity=0.2]{};
    \node[](d3i1l2)[below right=of d1i1l2]{$\dots$};

    \node[state](1ikl1)[right=2cm of d1i1l1]{$1$};
    \node[state](2ikl1)[right=of 1ikl1]{$2$};
    \node[](d1ikl1)[right=.2cm of 2ikl1]{$\dots$};
    \node[state](nikl1)[right=.3cm of d1ikl1]{$n$};
    \node[state](np1ikl1)[below=of 1ikl1]{$n+1$};
    \node[state](np2ikl1)[below=of 2ikl1]{$n+2$};
    \node[](d2ikl1)[right=.2cm of np2ikl1]{$\dots$};
    \node[state](2nikl1)[right=.2cm of d2ikl1]{$2n$};
    \node (ikl1) [draw=red, fit= (np1ikl1) (2nikl1) (nikl1), inner sep=0.1cm, fill=red!20, fill opacity=0.2]{};

    \node[state](1ikl2)[right=2cm of d1i1l2]{$1$};
    \node[state](2ikl2)[right=of 1ikl2]{$2$};
    \node[](d1ikl2)[right=.2cm of 2ikl2]{$\dots$};
    \node[state](nikl2)[right=.3cm of d1ikl2]{$n$};
    \node[state](np1ikl2)[below=of 1ikl2]{$n+1$};
    \node[state](np2ikl2)[below=of 2ikl2]{$n+2$};
    \node[](d2ikl2)[right=.2cm of np2ikl2]{$\dots$};
    \node[state](2nikl2)[right=.2cm of d2ikl2]{$2n$};
    \node (ikl1) [draw=red, fit= (np1ikl2) (2nikl2) (nikl2), inner sep=0.1cm, fill=red!20, fill opacity=0.2]{};

    \node[state](1ikp1l1)[right=of nikl1]{$1$};
    \node[state](np1ikp1l1)[below=of 1ikp1l1]{$n+1$};
    \node[](d1ikp1l1)[right=.2cm of 1ikp1l1]{$\dots$};
    \node[](d2ikp1l1)[right=.2cm of np1ikp1l1]{$\dots$};
    \node (ikp1l1) [draw=red, fit= (np1ikp1l1) (d1ikp1l1) (1ikp1l1), inner sep=0.1cm, fill=red!20, fill opacity=0.2]{};
    \node[](d3ikp1l1)[below right=of d1ikp1l1]{$\dots$};

    \node[state](np1ikp1l2)[right=of 2nikl2]{$n+1$};
    \node[state](1ikp1l2)[above=of np1ikp1l2]{$1$};
    \node[](d1ikp1l2)[right=.2cm of 1ikp1l2]{$\dots$};
    \node[](d2ikp1l2)[right=.2cm of np1ikp1l2]{$\dots$};
    \node (ikp1l2) [draw=red, fit= (np1ikp1l2) (d1ikp1l2) (1ikp1l2), inner sep=0.1cm, fill=red!20, fill opacity=0.2]{};
    \node[](d3ikp1l2)[below right=of d1ikp1l2]{$\dots$};

    \node[](d1inl1)[above right=of d3ikp1l1]{$\dots$};
    \node[state](ninl1)[right=.2cm of d1inl1]{$n$};
    \node[state](2ninl1)[below=of ninl1]{$2n$};
    \node[](d2inl1)[left=.2cm of 2ninl1]{$\dots$};
    \node (inl1) [draw=red, fit= (d1inl1) (2ninl1) (ninl1), inner sep=0.1cm, fill=red!20, fill opacity=0.2]{};
    
    \node[](d1inl2)[above right=of d3ikp1l2]{$\dots$};
    \node[state](ninl2)[right=.2cm of d1inl2]{$n$};
    \node[state](2ninl2)[below=of ninl2]{$2n$};
    \node[](d2inl2)[left=.2cm of 2ninl2]{$\dots$};
    \node (inl2) [draw=red, fit= (d1inl2) (2ninl2) (ninl2), inner sep=0.1cm, fill=red!20, fill opacity=0.2]{};

    \node[state,accepting](fin)[right=.6cm of 2ninl1]{$1$};
    \node[state,accepting](fail)[right=.6cm of ninl2]{$0$};
    
    \path[->] 

    (start) edge [bend left=10,above left] node [align=center] {$q$} (1i1l1)
            edge [below left] node [align=center] {$1-q$} (1i1l2)
    
    (1ikl1) edge [above] node [align=center] {$p$} (2ikl1)
            edge [left] node [pos=.3] {$1-p$} (np2ikl1)
    (np1ikl1) edge [above] node [align=center] {$1-p$} (np2ikl1)
            edge [left] node [pos=.3] {$p$} (2ikl1)
    (nikl1) edge [left] node [align=center] {$1$} (2nikl1)
    (2nikl1) edge node {$q$} (1ikp1l1)
            edge node [pos=.7] {$1-q$} (1ikp1l2)

    (1ikl2) edge [above] node [align=center] {$p$} (2ikl2)
            edge [left] node [pos=.3] {$1-p$} (np2ikl2)
    (np1ikl2) edge [above] node [align=center] {$1-p$} (np2ikl2)
            edge [left] node [pos=.3] {$p$} (2ikl2)
    (2nikl2) edge [left] node [align=center] {$1$} (nikl2)

    (nikl2) edge node {$1-q$} (1ikp1l2)
            edge [bend left=3] node [pos=.4] {$q$} (1ikp1l1)

    (ninl1) edge [left] node [align=center] {$1$} (2ninl1)
    (2ninl1) edge node {$q$} (fin)
            edge node {$1-q$} (fail)

    (2ninl2) edge [left] node [align=center] {$1$} (ninl2)
    (ninl2) edge [] node [pos=.7] {$q$} (fin)
            edge [below] node {$1-q$} (fail)
    ;
    \end{tikzpicture}
}\qquad
\subfloat[Variant C. \label{fig:reach0_ptime_var_c}] {
    \begin{tikzpicture}[shorten >=1pt,auto,node distance=.9 cm, scale = 0.5, transform shape,initial text={}]
    
    \node[state, initial](start){$s$};

    \node[state](np1i1l1)[above right=of start]{$n+1$};
    \node[state](1i1l1)[above=of np1i1l1]{$1$};
    \node[](d1i1l1)[right=.2cm of 1i1l1]{$\dots$};
    \node[](d2i1l1)[right=.2cm of np1i1l1]{$\dots$};
    \node (i1l1) [draw=red, fit= (np1i1l1) (d1i1l1) (1i1l1), inner sep=0.1cm, fill=red!20, fill opacity=0.2]{};
    \node[](d3i1l1)[below right=of d1i1l1]{$\dots$};

    \node[state](1i1l2)[below right=of start]{$1$};
    \node[state](np1i1l2)[below=of 1i1l2]{$n+1$};
    \node[](d1i1l2)[right=.2cm of 1i1l2]{$\dots$};
    \node[](d2i1l2)[right=.2cm of np1i1l2]{$\dots$};
    \node (i1l2) [draw=red, fit= (np1i1l2) (d1i1l2) (1i1l2), inner sep=0.1cm, fill=red!20, fill opacity=0.2]{};
    \node[](d3i1l2)[below right=of d1i1l2]{$\dots$};

    \node[state](1ikl1)[right=2cm of d1i1l1]{$1$};
    \node[state](2ikl1)[right=of 1ikl1]{$2$};
    \node[](d1ikl1)[right=.2cm of 2ikl1]{$\dots$};
    \node[state](nikl1)[right=.3cm of d1ikl1]{$n$};
    \node[state](np1ikl1)[below=of 1ikl1]{$n+1$};
    \node[state](np2ikl1)[below=of 2ikl1]{$n+2$};
    \node[](d2ikl1)[right=.2cm of np2ikl1]{$\dots$};
    \node[state](2nikl1)[right=.2cm of d2ikl1]{$2n$};
    \node (ikl1) [draw=red, fit= (np1ikl1) (2nikl1) (nikl1), inner sep=0.1cm, fill=red!20, fill opacity=0.2]{};

    \node[state](1ikl2)[right=2cm of d1i1l2]{$1$};
    \node[state](2ikl2)[right=of 1ikl2]{$2$};
    \node[](d1ikl2)[right=.2cm of 2ikl2]{$\dots$};
    \node[state](nikl2)[right=.3cm of d1ikl2]{$n$};
    \node[state](np1ikl2)[below=of 1ikl2]{$n+1$};
    \node[state](np2ikl2)[below=of 2ikl2]{$n+2$};
    \node[](d2ikl2)[right=.2cm of np2ikl2]{$\dots$};
    \node[state](2nikl2)[right=.2cm of d2ikl2]{$2n$};
    \node (ikl1) [draw=red, fit= (np1ikl2) (2nikl2) (nikl2), inner sep=0.1cm, fill=red!20, fill opacity=0.2]{};

    \node[state](1ikp1l1)[right=of nikl1]{$1$};
    \node[state](np1ikp1l1)[below=of 1ikp1l1]{$n+1$};
    \node[](d1ikp1l1)[right=.2cm of 1ikp1l1]{$\dots$};
    \node[](d2ikp1l1)[right=.2cm of np1ikp1l1]{$\dots$};
    \node (ikp1l1) [draw=red, fit= (np1ikp1l1) (d1ikp1l1) (1ikp1l1), inner sep=0.1cm, fill=red!20, fill opacity=0.2]{};
    \node[](d3ikp1l1)[below right=of d1ikp1l1]{$\dots$};

    \node[state](np1ikp1l2)[right=of 2nikl2]{$n+1$};
    \node[state](1ikp1l2)[above=of np1ikp1l2]{$1$};
    \node[](d1ikp1l2)[right=.2cm of 1ikp1l2]{$\dots$};
    \node[](d2ikp1l2)[right=.2cm of np1ikp1l2]{$\dots$};
    \node (ikp1l2) [draw=red, fit= (np1ikp1l2) (d1ikp1l2) (1ikp1l2), inner sep=0.1cm, fill=red!20, fill opacity=0.2]{};
    \node[](d3ikp1l2)[below right=of d1ikp1l2]{$\dots$};

    \node[](d1inl1)[above right=of d3ikp1l1]{$\dots$};
    \node[state](ninl1)[right=.2cm of d1inl1]{$n$};
    \node[state](2ninl1)[below=of ninl1]{$2n$};
    \node[](d2inl1)[left=.2cm of 2ninl1]{$\dots$};
    \node (inl1) [draw=red, fit= (d1inl1) (2ninl1) (ninl1), inner sep=0.1cm, fill=red!20, fill opacity=0.2]{};
    
    \node[](d1inl2)[above right=of d3ikp1l2]{$\dots$};
    \node[state](ninl2)[right=.2cm of d1inl2]{$n$};
    \node[state](2ninl2)[below=of ninl2]{$2n$};
    \node[](d2inl2)[left=.2cm of 2ninl2]{$\dots$};
    \node (inl2) [draw=red, fit= (d1inl2) (2ninl2) (ninl2), inner sep=0.1cm, fill=red!20, fill opacity=0.2]{};

    \node[state,accepting](fin)[right=.6cm of 2ninl1]{$1$};
    \node[state,accepting](fail)[right=.6cm of ninl2]{$0$};
    
    \path[->] 

    (start) edge [bend left=10,above left] node [align=center] {$q$} (1i1l1)
            edge [below left] node [align=center] {$1-q$} (1i1l2)
    
    (1ikl1) edge [above] node [align=center] {$p$} (2ikl1)
            edge [left] node [pos=.3] {$1-p$} (np2ikl1)
    (np1ikl1) edge [above] node [align=center] {$p$} (np2ikl1)
            edge [left] node [pos=.3] {$1-p$} (2ikl1)
    (nikl1) edge [left] node [align=center] {$1$} (2nikl1)
    (2nikl1) edge node {$q$} (1ikp1l1)
            edge node [pos=.7] {$1-q$} (1ikp1l2)

    (1ikl2) edge [above] node [align=center] {$p$} (2ikl2)
            edge [left] node [pos=.3] {$1-p$} (np2ikl2)
    (np1ikl2) edge [above] node [align=center] {$p$} (np2ikl2)
            edge [left] node [pos=.3] {$1-p$} (2ikl2)
    (nikl2) edge [left] node [align=center] {$1$} (2nikl2)
    (2nikl2) edge node {$q$} (1ikp1l2)
            edge [bend left=3] node [pos=.4] {$1-q$} (1ikp1l1)

    (ninl1) edge [left] node [align=center] {$1$} (2ninl1)
    (2ninl1) edge node {$p$} (fin)
            edge node {$1-p$} (fail)

    (ninl2) edge [left] node [align=center] {$1$} (2ninl2)
    (2ninl2) edge [bend right=22] node [pos=.7] {$p$} (2ninl1)
            edge [right] node {$1-p$} (fail)
    ;
    \end{tikzpicture}
}\qquad
\subfloat[Variant D. \label{fig:reach0_ptime_var_d}] {
    \begin{tikzpicture}[shorten >=1pt,auto,node distance=.9 cm, scale = 0.5, transform shape,initial text={}]
    
    \node[state, initial](start){$s$};

    \node[state](np1i1l1)[above right=of start]{$n+1$};
    \node[state](1i1l1)[above=of np1i1l1]{$1$};
    \node[](d1i1l1)[right=.2cm of 1i1l1]{$\dots$};
    \node[](d2i1l1)[right=.2cm of np1i1l1]{$\dots$};
    \node (i1l1) [draw=red, fit= (np1i1l1) (d1i1l1) (1i1l1), inner sep=0.1cm, fill=red!20, fill opacity=0.2]{};
    \node[](d3i1l1)[below right=of d1i1l1]{$\dots$};

    \node[state](1i1l2)[below right=of start]{$1$};
    \node[state](np1i1l2)[below=of 1i1l2]{$n+1$};
    \node[](d1i1l2)[right=.2cm of 1i1l2]{$\dots$};
    \node[](d2i1l2)[right=.2cm of np1i1l2]{$\dots$};
    \node (i1l2) [draw=red, fit= (np1i1l2) (d1i1l2) (1i1l2), inner sep=0.1cm, fill=red!20, fill opacity=0.2]{};
    \node[](d3i1l2)[below right=of d1i1l2]{$\dots$};

    \node[state](1ikl1)[right=2cm of d1i1l1]{$1$};
    \node[state](2ikl1)[right=of 1ikl1]{$2$};
    \node[](d1ikl1)[right=.2cm of 2ikl1]{$\dots$};
    \node[state](nikl1)[right=.3cm of d1ikl1]{$n$};
    \node[state](np1ikl1)[below=of 1ikl1]{$n+1$};
    \node[state](np2ikl1)[below=of 2ikl1]{$n+2$};
    \node[](d2ikl1)[right=.2cm of np2ikl1]{$\dots$};
    \node[state](2nikl1)[right=.2cm of d2ikl1]{$2n$};
    \node (ikl1) [draw=red, fit= (np1ikl1) (2nikl1) (nikl1), inner sep=0.1cm, fill=red!20, fill opacity=0.2]{};

    \node[state](1ikl2)[right=2cm of d1i1l2]{$1$};
    \node[state](2ikl2)[right=of 1ikl2]{$2$};
    \node[](d1ikl2)[right=.2cm of 2ikl2]{$\dots$};
    \node[state](nikl2)[right=.3cm of d1ikl2]{$n$};
    \node[state](np1ikl2)[below=of 1ikl2]{$n+1$};
    \node[state](np2ikl2)[below=of 2ikl2]{$n+2$};
    \node[](d2ikl2)[right=.2cm of np2ikl2]{$\dots$};
    \node[state](2nikl2)[right=.2cm of d2ikl2]{$2n$};
    \node (ikl1) [draw=red, fit= (np1ikl2) (2nikl2) (nikl2), inner sep=0.1cm, fill=red!20, fill opacity=0.2]{};

    \node[state](1ikp1l1)[right=of nikl1]{$1$};
    \node[state](np1ikp1l1)[below=of 1ikp1l1]{$n+1$};
    \node[](d1ikp1l1)[right=.2cm of 1ikp1l1]{$\dots$};
    \node[](d2ikp1l1)[right=.2cm of np1ikp1l1]{$\dots$};
    \node (ikp1l1) [draw=red, fit= (np1ikp1l1) (d1ikp1l1) (1ikp1l1), inner sep=0.1cm, fill=red!20, fill opacity=0.2]{};
    \node[](d3ikp1l1)[below right=of d1ikp1l1]{$\dots$};

    \node[state](np1ikp1l2)[right=of 2nikl2]{$n+1$};
    \node[state](1ikp1l2)[above=of np1ikp1l2]{$1$};
    \node[](d1ikp1l2)[right=.2cm of 1ikp1l2]{$\dots$};
    \node[](d2ikp1l2)[right=.2cm of np1ikp1l2]{$\dots$};
    \node (ikp1l2) [draw=red, fit= (np1ikp1l2) (d1ikp1l2) (1ikp1l2), inner sep=0.1cm, fill=red!20, fill opacity=0.2]{};
    \node[](d3ikp1l2)[below right=of d1ikp1l2]{$\dots$};

    \node[](d1inl1)[above right=of d3ikp1l1]{$\dots$};
    \node[state](ninl1)[right=.2cm of d1inl1]{$n$};
    \node[state](2ninl1)[below=of ninl1]{$2n$};
    \node[](d2inl1)[left=.2cm of 2ninl1]{$\dots$};
    \node (inl1) [draw=red, fit= (d1inl1) (2ninl1) (ninl1), inner sep=0.1cm, fill=red!20, fill opacity=0.2]{};
    
    \node[](d1inl2)[above right=of d3ikp1l2]{$\dots$};
    \node[state](ninl2)[right=.2cm of d1inl2]{$n$};
    \node[state](2ninl2)[below=of ninl2]{$2n$};
    \node[](d2inl2)[left=.2cm of 2ninl2]{$\dots$};
    \node (inl2) [draw=red, fit= (d1inl2) (2ninl2) (ninl2), inner sep=0.1cm, fill=red!20, fill opacity=0.2]{};

    \node[state,accepting](fin)[right=.6cm of 2ninl1]{$1$};
    \node[state,accepting](fail)[right=.6cm of ninl2]{$0$};
    
    \path[->] 

    (start) edge [bend left=10,above left] node [align=center] {$q$} (1i1l1)
            edge [below left] node [align=center] {$1-q$} (1i1l2)
    
    (1ikl1) edge [above] node [align=center] {$p$} (2ikl1)
            edge [left] node [pos=.3] {$1-p$} (np2ikl1)
    (np1ikl1) edge [above] node [align=center] {$p$} (np2ikl1)
            edge [left] node [pos=.3] {$1-p$} (2ikl1)
    (nikl1) edge [left] node [align=center] {$1$} (2nikl1)
    (2nikl1) edge node {$1-q$} (1ikp1l1)
            edge node [pos=.7] {$q$} (1ikp1l2)

    (1ikl2) edge [above] node [align=center] {$p$} (2ikl2)
            edge [left] node [pos=.3] {$1-p$} (np2ikl2)
    (np1ikl2) edge [above] node [align=center] {$p$} (np2ikl2)
            edge [left] node [pos=.3] {$1-p$} (2ikl2)
    (nikl2) edge [left] node [align=center] {$1$} (2nikl2)
    (2nikl2) edge node {$q$} (1ikp1l2)
            edge [bend left=3] node [pos=.4] {$1-q$} (1ikp1l1)

    (ninl1) edge [left] node [align=center] {$1$} (2ninl1)
    (2ninl1) edge node {$p$} (fin)
            edge node {$1-p$} (fail)

    (ninl2) edge [left] node [align=center] {$1$} (2ninl2)
    (2ninl2) edge [bend right=22] node [pos=.7] {$p$} (2ninl1)
            edge [right] node {$1-p$} (fail)
    ;
    \end{tikzpicture}
}
    \caption{The 4 variants of our benchmark for a given parameter $n$ which have $4n^2+3$ states and $2n+3$ equivalence classes each. Every red box represents one equivalence class with $2n$ states.}
    \label{fig:custom}
\end{figure}
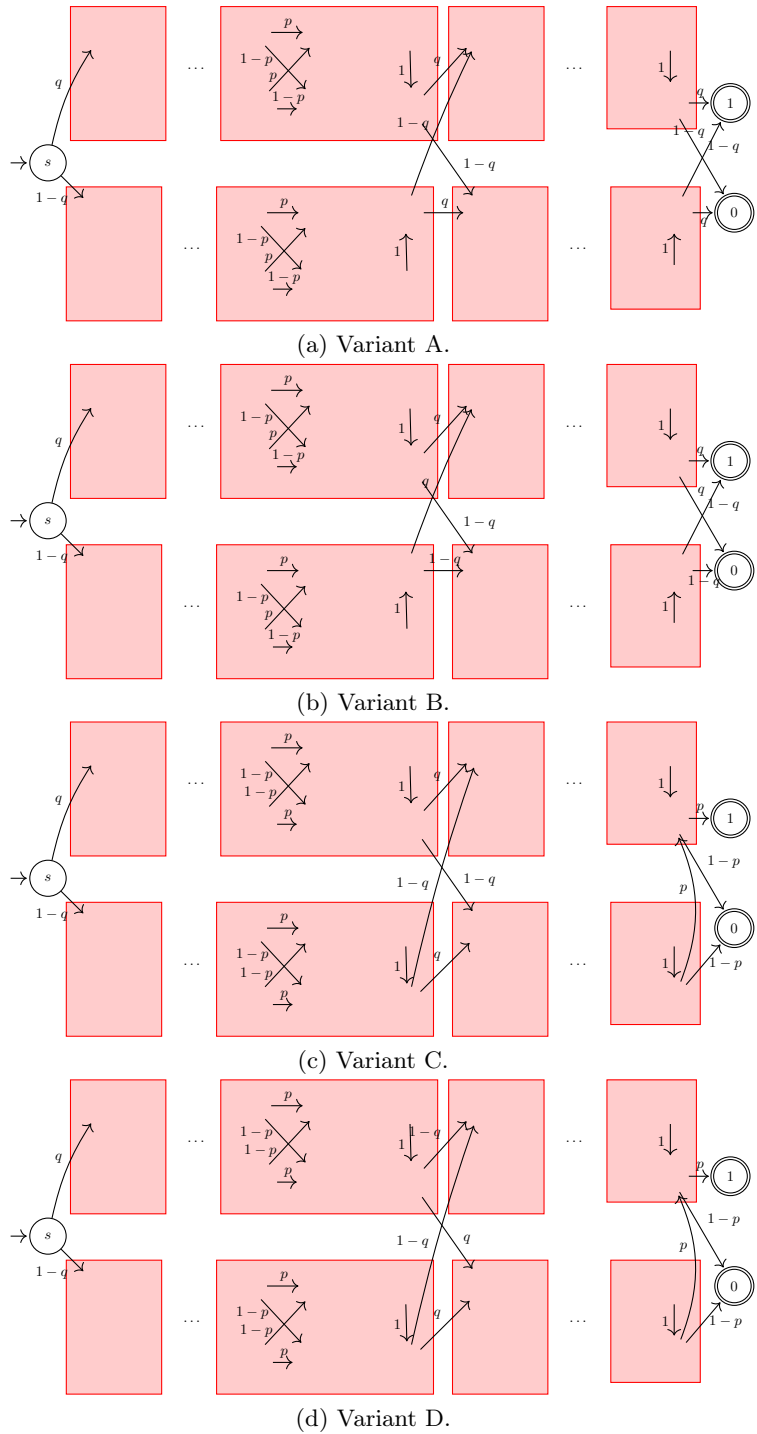

We implemented \Cref{algo:eq} in Python and set up an experimental pipeline as depicted in \Cref{fig:experimental_flow_chart}. First each model, specified in PRISM format, is read by Storm-pars (v1.8.1) and converted to the JSON explicit format (A). This explicit model is then reduced in size using our algorithm (B). Then, the explicit representations of both the original and collapsed models are converted back into PRISM (C). The latter is done by explicitly hard-coding every transition. Finally, both the original and collapsed models are used as input to either the monotonicity or parameter lifting module of Storm-pars (D).

All experiments were run on a 2021 MacBook M1 Pro. For Storm, this we used the docker image \texttt{movesrwth/storm:1.8.1} with 8GB allocated memory. When reproducing the monotonicity  experiments from \cite{monotonicity}, a time-out of 4h was used; the parameter-lifting experiments from \cite{DBLP:conf/tacas/SpelJK21} were given a 30min timeout. The bounds were chosen to match the experimental setup of the cited papers.

We should note that we only ran the monotonicy analysis and parameter-lifting modules of Storm on the benchmarks that actually reported a size reduction. Those benchmarks for which there was no size reduction were not considered further. The excluded benchmarks include the Gambler, Zeroconf, and Message authentication benchmarks \cite{DBLP:conf/tacas/SpelJK21} as well as the Herman benchmark \cite{DBLP:conf/vmcai/HeckSJMK22}.


\subsection{Custom benchmarks}

In order to more accurately evaluate the performance of our algorithm we have also constructed our own benchmark with 4 variants. These are depicted in \Cref{fig:reach0_ptime_var_a,fig:reach0_ptime_var_b,fig:reach0_ptime_var_c,fig:reach0_ptime_var_d}. All of these variants consist of a start state $s$, the final state $1$, the fail state $0$ and $2n$ red ``blocks'' with each block depicting an equivalence class containing $2n$ states. Hence, the total number of states in each block is $4n^2+3$, although in our implementation the states in the beginning of each block with no incoming edges (for instance, the state $n+1$ in the blocks in \Cref{fig:reach0_ptime_var_a}) are automatically deleted. This leads to the actual number of states in the model being $4n^2-2n+1$. These variants all share the same structure of the states, but differ slightly in the transitions between them. It is shown in \Cref{tab:custom_results} that these different benchmarks induce different behavior in the monotonicity analysis.

All variants have an equal number of states, and also have two parameters $p$ and $q$. The size of the model can be adjusted using the variable $n$. In variants A and B the parameter $p$ is no longer present after applying our algorithm, leaving only $q$ since the only transitions with the parameter $p$ are present within the blocks, and these are collapsed into a single state. The variants C and D retain both parameters $p$ and $q$ after applying our algorithm since $p$ also occurs in the transitions from the last two blocks to the final and fail states.

The monotonicity properties of the four variants also differ. Variants A and C are not monotonic in $q$ while variants B and D are monotonically increasing in $q$. Moreover, C and D are both monotonically increasing in $p$.

\begin{table}[]
    \centering
    \begin{tabular}{|c|c|c|c|c|c|c|c|c|}
    \hline
        Benchmark & \multicolumn{2}{|c|}{Size (states)} & Collapse time (s) & \multicolumn{2}{|c|}{Mono. time (s)} & \multicolumn{2}{|c|}{Lifting time (s)} \\
    \hline
         & Before & after &  & before & after & before & after\\
    \hline
    \hline
      Custom (A) & & & & & & & \\
    \hline
      n=2 & 15 & 7 & <1 & <1 & <1 & <1 & <1 \\
    \hline
      n=3 & 33 & 9 & <1 & <1 & <1 & <1 & <1  \\
    \hline
      n=8 & 243 & 19 & <1 & 4.38 & <1 & <1 & <1  \\
    \hline
      n=10 & 383 & 23 & <1 & TO & <1 & <1 & <1  \\
    \hline
      n=15 & 873 & 33 & <1 & TO & <1 & <1 & <1  \\
    \hline
      n=50 & 9903 & 103 & 3.52 & TO & <1 & 5.781 & <1 \\
    \hline
      n=100 & 39803 & 203 & 28.26 & N/A & N/A & 87.535 & <1 \\
    \hline
      n=150 & 89703 & 303 & 88.78 & N/A & N/A & TO (file) & <1\\
    \hline
    \hline
      Custom (B) & & & & & & &  \\
    \hline
      n=2 & 15 & 7 & <1 & <1 & <1 & <1 & <1  \\
    \hline
      n=25 & 2453 & 53 & <1 & <1 & <1 & <1 & <1  \\
    \hline
      n=50 & 9903 & 103 & 3.29 & <1 & <1 & <1 & <1 \\
    \hline
      n=100 & 39803 & 203 & 25.71 & N/A & N/A & <1 & <1 \\
    \hline
      n=150 & 89703 & 303 & 87.76 & N/A & N/A & TO (file) & <1 \\
    \hline
    \hline
      Custom (C) & & & & & & &  \\
    \hline
      n=2 & 15 & 7 & <1 & <1 & <1 & <1 & <1  \\
    \hline
      n=3 & 33 & 9 & <1 & <1 & <1 & <1 & <1  \\
    \hline
      n=8 & 243 & 19 & <1 & 2.24 & <1 & <1 & <1  \\
    \hline
      n=10 & 383 & 23 & <1 & 30.28 & <1 & <1 & <1  \\
    \hline
      n=15 & 873 & 33 & <1 & 10381.17 & <1 & <1 & <1  \\
    \hline
      n=20 & 1563 & 43 & <1 & TO & <1 & <1 & <1  \\
    \hline
      n=50 & 9903 & 103 & 3.29 & TO & <1 & <1 & <1 \\
    \hline
      n=100 & 39803 & 203 & 25.60 & N/A & N/A & <1 & <1 \\
    \hline
      n=150 & 89703 & 303 & 86.36 & N/A & N/A & TO (file) & <1 \\
    \hline
    \hline
      Custom (D) & & & & & & &  \\
    \hline
      n=2 & 15 & 7 & <1 & <1 & <1 & <1 & <1  \\
    \hline
      n=25 & 2453 & 53 & <1 & <1 & <1 & <1 & <1  \\
    \hline
      n=50 & 9903 & 103 & 3.35 & <1 & <1 &  <1 & <1\\
    \hline
      n=100 & 39803 & 203 & 25.73 & N/A & N/A & 1.330 & <1\\
    \hline
      n=150 & 89703 & 303 & 89.31 & N/A & N/A & TO (file) & <1 \\
    \hline
    \end{tabular}
    \caption{Comparison of sizes and times to check monotonicity and realize parameter lifting with and without our reduction techniques. We write TO to denote time-outs, N/A when the experiment was not executed,  and TO (file) when the module ran out of time while parsing the file. (Recall we are using explicitly encoded transitions.)}
    \label{tab:custom_results}
\end{table}

\subsection{Results and tables}

A quantitative summary of the results can be found in \Cref{tab:mono_benchmarks,tab:lifting_benchmarks,tab:custom_results,tab:custom_big}. Experiments that were not performed are denoted as N/A. The benchmarks are listed together with the constant values used when instantiating the models. 

\Cref{tab:custom_results} contains the results of applying our algorithm, as well as the monotonicity analysis and parameter lifting modules of Storm. The first column contains the details of the benchmarks. Next is the size of the model before and after applying our algorithm. This is followed by the time our algorithm took to identify and collapse the equivalence classes. The last four columns contain the running times of the Storm-pars modules for monotonicity analysis, and parameter lifting. Times are listed with and without applying our algorithm before the model is passed to Storm.
The reported time of the collapse algorithm only includes the time it took to identify and collapse the equivalence classes, and does not include parsing or collapsing of the extremal states; the values for monotonicity-analysis and parameter-lifting times are those reported by Storm as ``monotonicity analysis'' and ``model checking'', respectively.

In \Cref{tab:custom_results}, we observe that there is a significant difference in the running time of the monotonicity analysis for those benchmarks that have a non-monotonic parameter (i.e., variants A and C). For parameter lifting, there are differences in running times for bigger models ($n \geq 50$) coming from benchmarks that contain a non-monotonic parameter.

For some values of benchmark constants, we did not run the monotonocity analysis or parameter lifting, since we expected time-outs or memory-outs. We did, however, run our reduction algorithm on those models, with the results visible in \Cref{tab:custom_big}. We observe it is possible to run our algorithm on models of up to a million states in a reasonable amount of time.

\begin{table}[ht]
    \centering
    \begin{tabular}{|c|c|c|c|c|}
    \hline
        Benchmark & \multicolumn{2}{|c|}{Size (states)} & Collapse time (s) \\
    \hline
         & before & after &  \\
    \hline
    \hline
      Custom (A) & & & \\
    \hline
      n=300 & 359403 & 603 & 721.19 \\
    \hline
      n=400 & 639203 & 803 & 1647.76 \\
    \hline
      n=500 & 999003 & 1003 & 3162.41\\
    \hline
    \hline
      Custom (B) & & & \\
    \hline
      n=300 & 359403 & 603 & 734.91 \\
    \hline
      n=400 & 639203 & 803 & 1634.48 \\
    \hline
      n=500 & 999003 & 1003 & 3106.48 \\
    \hline
    \hline
      Custom (C) & & & \\
    \hline
    n=300 & 359403 & 603 & 709.35 \\
    \hline
      n=400 & 639203 & 803 & 1630.56  \\
    \hline
      n=500 & 999003 & 1003 & 3095.20  \\
    \hline
    \hline
      Custom (D) & & & \\
    \hline
      n=300 & 359403 & 603 & 699.97 \\
    \hline
      n=400 & 639203 & 803 & 1624.82  \\
    \hline
      n=500 & 999003 & 1003 & 3050.54 \\
    \hline
    \end{tabular}
    \caption{Comparison of size and reduction times only, for larger instances of our custom benchmarks.}
    \label{tab:custom_big}
\end{table}

Aside from running our algorithm on our custom benchmarks, we also performed experiments on the benchmarks from Spel et al. \cite{DBLP:conf/tacas/SpelJK21} as well as Heck et al. \cite{DBLP:conf/vmcai/HeckSJMK22}. The results can be seen in \Cref{tab:mono_benchmarks,tab:lifting_benchmarks}, respectively. The meaning of the columns is the same as in \Cref{tab:custom_results}. For these benchmarks, our algorithm causes no noticeable differences in the running times of the Storm-pars modules. The model size is significantly reduced, however.

\begin{table}[h]
    \centering
    \begin{tabular}{|c|c|c|c|c|c|c|}
    \hline
        Benchmark & \multicolumn{2}{|c|}{Size (states)} & Collapse time (s) & \multicolumn{2}{|c|}{Mono. time (s)} \\
    \hline
         & before & after &  & before & after\\
    \hline
    \hline
      brp & & & & &  \\
    \hline
      MAX=2, N=16 & 494 & 192 & <1 & <1 & <1   \\
    \hline
      MAX=2, N=32 & 990 & 384 & 1.02 & <1 & <1 \\
    \hline
      MAX=4, N=16 & 906 & 350 & <1 & <1 & <1   \\
    \hline
      MAX=4, N=32 & 1818 & 702 & 3.47 & <1 & <1 \\
    \hline
    \hline
      Crowds & & & & & \\
    \hline
      Size=5, Runs=3 & 268 & 182 & <1 & 82.19 & 71.66  \\
    \hline
      Size=5, Runs=6 & 6905 & 3782 & 79.10 & TO & TO  \\
    \hline 
    \end{tabular}
    \caption{Comparison of sizes and times to check monotonicity with and without our reduction technique for some benchmarks from the original monotonicity study.}
    \label{tab:mono_benchmarks}
\end{table}

\begin{table}[h]
    \centering
    \begin{tabular}{|c|c|c|c|c|c|c|}
    \hline
        Benchmark & \multicolumn{2}{|c|}{Size (states)} & Collapse time (s) & \multicolumn{2}{|c|}{Lifting time (s)} \\
    \hline
         & Before & after &  & before & after\\
    \hline
    \hline
      NRP & & & & &  \\
    \hline
      5,1 & 33 & 18 & <1 & 33.26 & 33.58  \\
    \hline
      6,1 & 45 & 24 & <1 & 1028.81 & 1029.974 \\
    \hline
      7,1 & 59 & 31 & <1 &  TO & MO \\
    \hline
      8,1 & 75 & 39 & <1 & TO & TO \\
    \hline
      9,1 & 93 & 48 & <1 &  TO & MO \\
    \hline
    \end{tabular}
    \caption{Comparison of sizes and times to realize parameter lifting with and without our reduction technique for the NRP benchmark from the original parameter lifting. We write MO  when the memory ran out.}
    \label{tab:lifting_benchmarks}
\end{table}

\section{Conclusions}
We have established a reduction from the NWR problem to the monotonicity problem. On the way, we used arithmetic circuits to restate and reformulate some known results about the computation of value functions. This new approach allowed us to obtain a pMC representation of the partial derivatives of the value functions of a given pMC. This representation turned out to be quasi-polynomial. The question of whether such a partial-derivative pMC of polynomial size can always be constructed is left unanswered.

In a more practical direction, we took the equivalence-class detection algorithm from \cite{DBLP:conf/atva/EngelenPR23} and presented it as a pre-processing step for monotonicity analysis and parameter lifting. To evaluate the idea, we implemented the algorithm and realized some experiments on old and new benchmarks. Our results point to the algorithm being useful in reducing the size of (our custom) benchmarks. Unsurprisingly, when our algorithm succeeds in drastically reducing the size of a (custom) benchmark, monotonicity analysis and parameter lifting do benefit from a performance boost by working on a smaller model. Unfortunately, for benchmarks introduced in previous works, while we do see some reduction in size by using our algorithm, neither monotonicity analysis nor parameter lifting seem to run faster in the resulting model. Based on these, we believe it may be a good idea to implement the pre-processing step as an option within the Storm model checker.

\subsection*{Acknowledgements}
We thank Micha\"el Cadilhac and Nikhil Balaji for pointers to the
arithmetic-circuit literature. We are also grateful to Linus Heck for helping us with Storm.

\bibliographystyle{splncs04}
\bibliography{refs}

\end{document}